\let\NAT@parse\undefined
\newtheorem {theorem}{Theorem}
\newtheorem {proof}{Proof}
\newtheorem {lemma}{Lemma}
\begin{document}

\title{Multi-Static ISAC based on Network-Assisted Full-Duplex Cell-Free Networks: Performance Analysis and Duplex Mode
Optimization}

\author{Fan Zeng, Ruoyun Liu, Xiaoyu Sun, Jingxuan Yu,~\IEEEmembership{Student,~IEEE}, Jiamin Li, Pengchen Zhu,
	Dongming Wang,~\IEEEmembership{Member,~IEEE},
	and Xiaohu You,~\IEEEmembership{Fellow,~IEEE}}



\maketitle

\let\thefootnote\relax\footnotetext{This work was supported in part by the National Key R\&D Program of China under Grant 2021YFB2900300, by the National Natural Science Foundation of China (NSFC) under Grants 61971127, by the Fundamental Research Funds for the Central Universities under Grant 2242022k60006, and by the Major Key Project of PCL (PCL2021A01-2).(\textit{Corresponding author: Jiamin Li.})}
\let\thefootnote\relax\footnotetext{The authors are with National Mobile Communications Research Laboratory, Southeast University, Nanjing, 210096, China (email: \{zengfan, ryliu, 230238231, jingxuanyu, jiaminli, p.zhu, wangdm, xhyu\}@seu.edu.cn). J. Li, D. Wang and X. You are also with Purple Mountain Laboratories, Nanjing 211111, China.}

\begin{abstract}
  Multi-static integrated sensing and communication (ISAC) technology, which can achieve a wider coverage range and avoid self-interference, is an important trend for the future development of ISAC. Existing multi-static ISAC designs are unable to support the asymmetric uplink (UL)/downlink (DL) communication requirements in the scenario while simultaneously achieving optimal sensing performance. This paper proposes a design for multi-static ISAC based on network-assisted full-duplex (NAFD) cell-free networks can well solve the above problems. Under this design, closed-form expressions for the individual comunication rate and localization error rate are derived under imperfect channel state information, which are respectively utilized to assess the communication and sensing performances. Then, we propose a deep Q-network-based accesss point (AP) duplex mode optimization algorithm to obtain the trade-off between communication and sensing from the UL and DL perspectives of the APs. Simulation results demonstrate that the NAFD-based ISAC system proposed in this paper can achieve significantly better communication performance than other ISAC systems while ensuring minimal impact on sensing performance. Then, we validate the accuracy of the derived closed-form expressions. Furthermore, the proposed optimization algorithm achieves performance comparable to that of the exhaustion method with low complexity.
\end{abstract}
\begin{IEEEkeywords}
  Multi-static integrated sensing and communication, network-assisted full-duplex system, access point duplex mode optimization, multi-objective optimization. 
\end{IEEEkeywords}

\section{Introduction}
\IEEEPARstart{I}{ntegrated} sensing and communication (ISAC) has received increasing attention as one of the six typical application scenarios of 6G\cite{Kaushik2024Toward,banafaa20236g}. In recent years, most of the research has stayed on the single-cell ISAC, such as integrated waveform design \cite{Liu2018Waveform}, joint transmission beamforming \cite{Sun2022Optimal}, and joint signal reception \cite{Guo2023Joint}, etc.

	Multi-static ISAC has more advantages than single-cell ISAC. In multi-static ISAC system, some access points (APs) send sensing signals while others receive echoes, which can avoid the problem of self-interference and achieve a wider coverage range. Multi-static sensing can also obtain distance and doppler-related information from different angles of the target, which can obtain more accurate information and effectively improve the accuracy and reliability of sensing. In addition, the multi-static ISAC system can also perform interference coordination. It can schedule some nodes as receiving/transmitting nodes according to the actual needs and conditions of the network, which can better adapt to different environments and task requirements. Therefore, the research in this paper is mainly based on multi-static ISAC.
	
	Rahman et al.\cite{ Rahman2020Framework } developed a new perceptual mobile network framework, which includes the architecture of multi-static ISAC. Huang et al.\cite {Huang2022Coordinated} proposed a coordinated power control design for multi-static ISAC systems aimed at maximizing the signal to interference plus noise ratio (SINR) at the user equipments (UEs) and the target position estimation Cramer-Rao Lower Bound (CRLB). However, the previous studies are based on cellular network implementation of multi-static ISAC. In multi-static ISAC system, the cross-link interference (CLI) within the system is exacerbated due to the increased number of sensing links. In traditional cellular-based ISAC, inter-cell link interference is complicated. In contrast, in the cell-free (CF) systems, signal transmission and reception have a larger DoF, which enables more flexible communication and sensing cooperation and various interference avoidance\cite{Fang2021Cell,Ammar2022User}. Therefore, there have been an increasing number of recent studies to integrate multi-static ISAC into CF networks. 
	
	Sakhnini et al.\cite{Sakhnini2022Target} were the first to propose a communication and sensing architecture combining CF networks and ISAC. In this architecture, a group of APs transmit downlink (DL) sensing signals and the remaining APs receive echo signals for sensing. All signal processing is performed in the central processing unit (CPU). However, the APs in this architecture only support either sensing or communication. Behdad et al. \cite{Behdad2022PowerAF} and Demirhan et al.\cite{demirhan2024cellfree} proposed power allocation strategies and beamforming strategies for CF-ISAC systems, respectively. These systems allow APs to communicate and sense simultaneously. However, these studies were conducted based on perfect channel state information (CSI) and did not consider the case of imperfect CSI. Mao et al. \cite{mao2023communicationsensing} derived a closed-form expression for the spectral efficiency (SE) of the CF-ISAC system, taking into account the uncertainty of the target location and the imperfect CSI information. However, in the previous study, only the case where uplink (UL) or DL communication coexists with sensing is considered, which constrains the system to serving only a subset of UEs at a time. For instance, in \cite{Sakhnini2022Target}, only UL UEs are considered to share resources with sensing. Conversely, in \cite{Behdad2022PowerAF}, \cite{demirhan2024cellfree}, and \cite{ mao2023communicationsensing }, only DL communications are considered. However, in real systems, the communication demands of UEs often present an asymmetry between UL and DL. The studies of Wei et al. \cite{Wei2020Load} and Soret et al. \cite{Soret2019Queueing} demonstrate that supporting only UL or DL on a piece of resource cannot respond quickly to such demands. In fact, in a multi-static ISAC system, some APs send DL sensing signals and others receive UL echoes and the system already has the capability of simultaneous UL and DL communication, i.e., it can send and receive signals at the same time. Nevertheless, to the best of our knowledge, there has been no research on multi-static ISAC systems for coexisting UL and DL communication.
	
	However, when UL and DL communication and sensing are performed simultaneously in the system, it will exacerbate the CLI in the system, which will seriously affect the performance of the system. Therefore, it is essential to suppress the CLI. Sit et al.  \cite{Sit2011Extension} proposed to reconstruct the interfering signal and perform the suppression of the CLI between sensing and communication using the pilot signal to enhance the dynamic range of the radar. Wang et al.\cite{Wang2019Performance} proposed the network-assisted full-duplex (NAFD) technology for the CLI suppression in the systems that only consider UL and DL communication. The NAFD is based on the CF network in which each AP can perform UL reception or DL transmission with joint signal processing at the CPU side. Once the channels between the DL APs and the UL APs have been estimated using the DL pilot signals, the CPU can obtain the DL precoded signals of all UEs in advance. Consequently, the CLI suppression can be achieved in the digital domain. In \cite{Wang2019Performance}, the case of perfect CSI is considered. Li et al.\cite{ Li2021Network} derived a closed-form expression for the SE of NAFD systems under imperfect CSI. However, similar interference cancellation has not been investigated for ISAC systems in which UL and DL communication coexist.
	
	Furthermore, the majority of existing multi-static ISAC studies are based on the assumption of fixed AP duplex modes\cite{Liu2023Performance,behdad2024multistatic,Shi2022Device}. In fact, the optimization of AP duplex mode represents a further management of the CLI within the system. By adjusting the AP duplex mode to meet the communication and sensing requirements of the system, the interference situation of the system can be further adjusted, thus improving the system performance. In previous studies, some of the literature has investigated the AP duplex mode optimization (ADMO) strategies in systems that only consider UL and DL communication without sensing\cite{Zhu2021Optimization,Mohammadi2023Network,Sun2023Hierarchical}. Zhu et al.\cite{ Zhu2021Optimization} investigated the ADMO for NAFD CF networks with the objective of maximizing the sum of SE, and proposed a parallel successive convex approximation-based algorithm and a reinforcement learning (RL) algorithm based on augmented Q-learning to solve this problem. Mohammadi et al.\cite{Mohammadi2023Network} proposed a joint optimization method to improve the sum of SE and energy efficiency of NAFD CF systems by jointly optimizing the AP duplex mode assignment, power control, and massive fading weights. Sun et al.\cite{Sun2023Hierarchical} proposed a pre-allocation mechanism for AP mode optimization in the NAFD CF networks, aiming to achieve a balance between SE and resource utilization.
	Liu et al.\cite{ liu2024cooperative } proposed three low-complexity heuristic algorithms to solve the ADMO problem in CF-ISAC networks, but each AP in their research architecture only supports sensing or communication. Elfiatoure et al.\cite{elfiatoure2023cellfree} proposed a greedy algorithm based on long term statistics for the ADMO in the CF-ISAC network. However, this study only considered the case of coexistence of DL communication and sensing within the system.
	
	Therefore, this paper focuses on the performance analysis and ADMO of multi-static ISAC system with UL and DL communication coexistence based on NAFD for CLI suppression. The main contributions of this paper are summarized as 
	\begin{enumerate}[]
		\item This paper considers a multi-static ISAC system with UL and DL communications coexisting and are able to respond rapidly to the UL and DL communication needs of UEs. To the best of the authors' knowledge, no previous studies have investigated in this kind of system.
		\item The CLI in the system is partially eliminated by using NAFD architecture. Then, we derive the closed-form expressions for the communication rate and location error rate (LER) to evaluate the performance of the communication and sensing operation, respectively. Based on the derived expressions, we establish a multi-objective optimization problem (MOOP) for getting the trade-off between communication and sensing.
		\item A deep Q-network (DQN)-based algorithm for ADMO is proposed, which enables the CPU to autonomously learn the near-optimal strategy for communication and sensing performance under different weights in this MOOP.
		\item The simulation results have verified the accuracy of the closed-form expression. By comparing the NAFD-based ISAC system with existing ISAC systems, the superiority of NAFD-based ISAC system in terms of communication performance and sensing performance has been demonstrated. Furthermore, the proposed ADMO algorithm can achieve performance close to that of the exhaustive method with extremely low complexity.
	\end{enumerate}
	\textbf{Notations}: Bold letters denote vectors or matrices. $\mathbf{I}_M$ denotes an M-dimensional identity matrix. The conjugate transpose and transpose are denoted by $(\cdot)^{\rm H}$ and $(\cdot)^{\rm T}$, respectively. $\left| \cdot \right|$ and $||\cdot ||$ represent the absolute value and spectral norm, respectively. $\mathbb{E}[\cdot]$ and $\rm cov[\cdot]$ denote expectation and covariance operators, respectively. $\rm tr(\cdot)$ denotes the trace of a matrix. The Kronecker product is denoted by $\otimes$. For a matrix $\mathbf{X}$, the $n$th element on the diagonal of $\mathbf{X}$ is represented by $[\mathbf{X}]_n$. Matrix inequality $\mathbf{X} \succeq \mathbf{Y}$ denotes that $\mathbf{X} - \mathbf{Y}$ is positive semidefinite. The estimation of $x$ is denoted by $\hat x$, and the estimation error is denoted by $e$. A circularly symmetric complex Gaussian random variable $x$ with mean zero and variance $\sigma^2$ is denoted as $x \sim \mathcal{CN} (0, \sigma^2)$. $\Gamma (k, \theta)$ denotes the Gamma distribution with parameters $k$ and $\theta$.
	\section{system model}
	Fig. \ref {fig:system_NAFD_ISAC} illustrates the schematic of a NAFD-based ISAC system that enables simultaneous UL and DL communication and sensing. The system consists of a CPU, $M$ uniformly distributed ISAC-APs, $K$ UEs, and $T$ passive sensing targets. All ISAC-APs are equipped with both communication and sensing capabilities. In this paper, we refer to these ISAC-APs as APs. $K_{\rm dl}$ UEs perform DL reception and $K_{\rm ul}$ perform UL transmission, and $K=K_{\rm dl}+K_{\rm ul }$. $\mathcal {M}$ is the set of all APs. $\mathcal {K}_{\rm dl}=\bigl\{{1,... ,K_{\rm dl}}\bigr\}$ and $\mathcal {K}_{\rm ul}=\bigl\{{1,... ,K_{\rm ul}}\bigr\}$ denote the set of all DL and UL UEs, respectively, and $\mathcal {K}=\mathcal {K}_{\rm dl}+\mathcal {K}_{\rm ul}$ denotes the set of all UEs. Each AP has $N$ antennas, and the UEs have single antennas. The APs can perform either UL reception or DL transmission, and this decision is made by the CPU. In the NAFD-based ISAC system, within a coherent block, the length of the coherent block is denoted as $\tau$, and the first $\tau_{\rm up}$ symbols are used for UL pilot training for the channel estimation of UEs. The next $\tau_{\rm dp}$ symbols are used for DL pilot training. In the DL training phase, the channel between the DL APs and DL UEs and the channel between the DL APs and UL APs can be estimated. Finally, the DL APs use coherent joint transmission to transmit the communication data and orthogonal sensing signal according to the specific beamforming and resource allocation algorithms designed by the CPU.
	\begin{figure}[htb]
		\centering
		\includegraphics[scale=0.6]{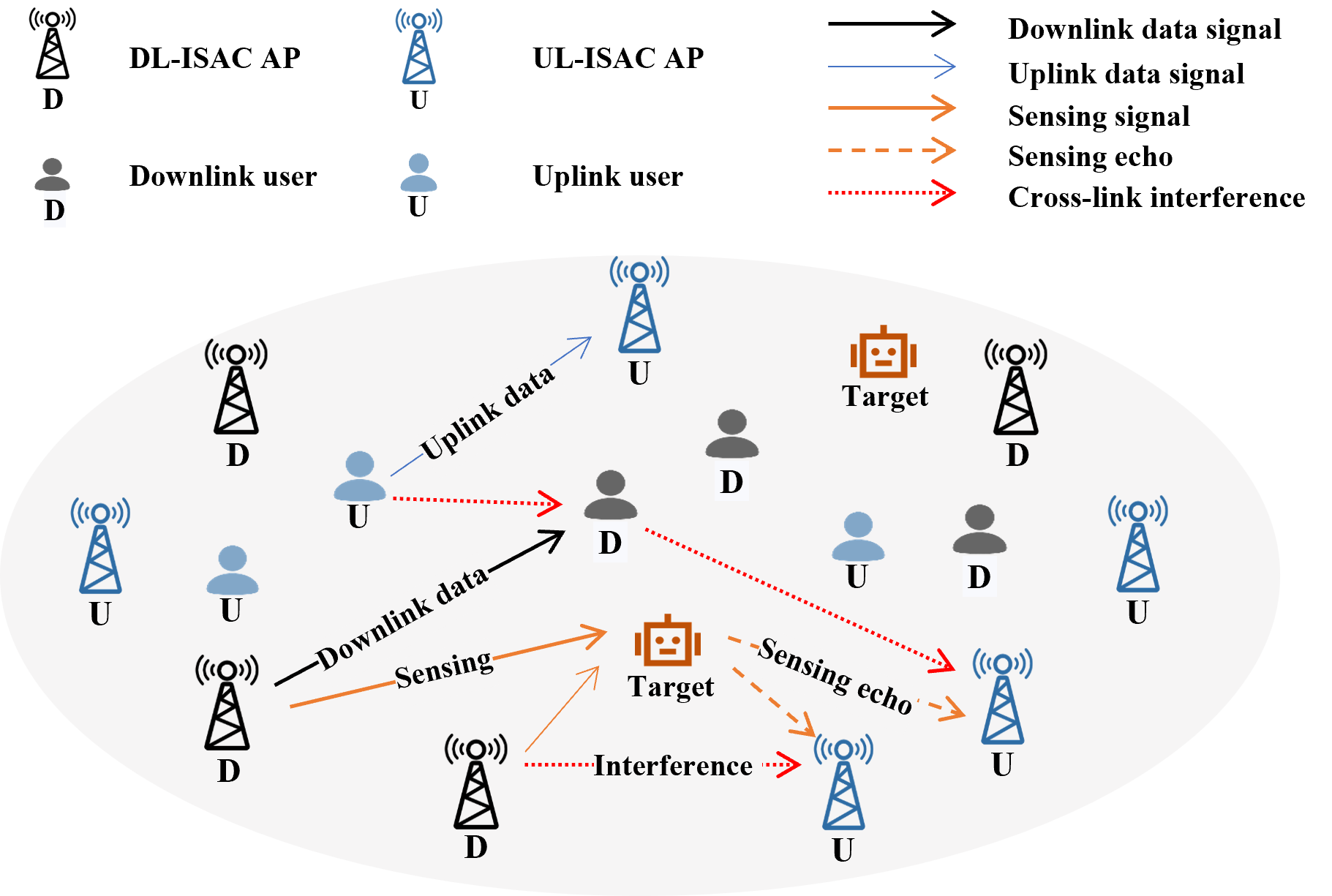}
		\caption {The schematic of the NAFD-based ISAC system}\label {fig:system_NAFD_ISAC}
	\end{figure}
	
	\subsection{Channel modeling}
	A quasi-static, flat-fading channel model is used in this study, where the channel remains static within each coherence interval and is flat in frequency. The channel is modeled as an aggregated channel containing a random scattering channel and the target-reflected channel. For instance, the random scattering channel between the $m$ AP and the $k$th UE is denoted as
	\begin{align}
		\label{form1:UA_channel}
		{\begin{aligned} {\mathbf{h}_{{\rm A}_m,{\rm U}_k}} = {\lambda _{{\rm A}_m,{\rm U}_k}}{\mathbf{q}_{{\rm A}_m,{\rm U}_k}} \in \mathbb{C}{^{N \times 1}},
		\end{aligned}}
	\end{align}
	where ${\lambda _{{\rm A}_m,{\rm U}_k}} \buildrel \Delta \over = { \rm d}_{{\rm A}_m,{\rm U}_k}^{ - {\alpha }}$  denotes the large-scale decay, $ {\rm d}_{{\rm A}_m,{\rm U}_k}^{}$ is the distance between the $m$th AP and the $k$th UE, ${ \rm \alpha}$ is the path loss exponent, ${\mathbf {q}_{{\rm A}_m,{\rm U}_k}}  \in \mathbb {C}{^{N \times 1}} \sim {\cal C}{\cal N}(0,{\mathbf {I}_N})$ is the small scale Rayleigh decay. For the target-reflected channel, it can be expressed mathematically as the product of the channel from AP to the targets and the channel from the targets to the UE.
	The target-reflected channel between the $m$th AP and the $t$th target can be modeled as
	\begin{align}
		\label{form2:AT_channel}
		{\begin{aligned} {\mathbf{h}_{{{ \rm A}_m},{{ \rm T}_t}}} = {\lambda _{ {{ \rm A}_m},{{ \rm T}_t}}}{\mathbf{q}_{ {{ \rm A}_m},{{ \rm T}_t}}} \in \mathbb{C}{^{N \times 1}},
		\end{aligned}}
	\end{align}
	where ${\lambda _{ {{{ \rm A}_m},{{ \rm T}_t}}}} \buildrel \Delta \over =  {\rm d}_{{ {{ \rm A}_m},{{ \rm T}_t}}}^{ - {\alpha }}$, $ {\rm d}_{{{{ \rm A}_m},{{ \rm T}_t}}}^{}$ is the distance between the $m$th AP and the $t$th target. Since the CPU knows the approximate position of the target but not the exact position, the position of the target can be expressed as ${\rm  d}_{{{ \rm A}_m},{{ \rm T}_t}}^{} \sim {\cal C}{\cal N}({\bar {\rm d}}_{{{ \rm A}_m},{{ \rm T}_t}}^{},\sigma _{{{ \rm A}_m},{{ \rm T}_t}}^2)$, $ {{\mathbf{q}}_{{{ \rm A}_m},{{ \rm T}_t}}} \sim {\cal C}{\cal N}({{\bar {\mathbf{ q}}}_{{{ \rm A}_m},{{ \rm T}_t}}},\chi _{{{ \rm A}_m},{{ \rm T}_t}}^2 {\mathbf {I}_N})$ is the steering vector between the $m$th AP and the $t$th target, where $ {\bar {\mathbf{q}}_{{{ \rm A}_m},{{ \rm T}_t}}}$ is denoted as
	\begin{align}
		\label{form3:qat}
		{\begin{aligned} {\bar {\mathbf{q}}_{{{ \rm A}_m},{{ \rm T}_t}}} = [1,{e^{j2\pi \frac{\mathbf{k}^{\rm T}_{mt} \mathbf{p}_{m1}}{\lambda }}}...,{e^{j2\pi \frac{\mathbf{k}^{\rm T}_{mt} \mathbf{p}_{mN}}{\lambda }}}] \in \mathbb{C} {^{N \times 1}},
		\end{aligned}}
	\end{align}
	where $\mathbf {k}_{mt}={[\cos ({\theta _{mt}}),\sin ({\theta _{mt}})]^{\rm T}}$ denotes the wave vector, and ${ \theta _{mt}}$ is the Direction of Arrival (DOA) of the $m$th AP to the $t$th target, $\mathbf {p}_{mi}={[{x_{mi}},{y_{mi}}]^{\rm T}}$ denotes the position of the $i$th antenna of the $m$th AP, and $\lambda$ denotes the signal wavelength. The channel between the $t$th target and the $k$th UE can be expressed as
	\begin{align}
		\label{form4:ut}
		{\begin{aligned} {{\rm h}_{{{ \rm T}_t},{ {\rm U}_k}}} = {\lambda _{{{ \rm T}_t},{ {\rm U}_k}}}{{ \rm q}_{{{ \rm T}_t},{ {\rm U}_k}}} \in \mathbb{C} {^{1 \times 1}},
		\end{aligned}}
	\end{align}
	where ${\lambda _{{{ \rm T}_t},{ {\rm U}_k}}} \buildrel \Delta \over = {\rm d}_{{{ \rm T}_t},{ {\rm U}_k}}^{ - {\alpha }}$, ${\rm d}_{{{ \rm T}_t},{ {\rm U}_k}}$ is the distance between between the $t$th target and the $k$th UE. Similarly the exact location of the target is unknown, ${\rm d}_{{{ \rm T}_t},{ {\rm U}_k}}^{} \sim {\cal C}{\cal N}({ \bar {\rm d}}_{{{ \rm T}_t},{ {\rm U}_k}}^{}$, ${{\rm q}_{{{ \rm T}_t},{ {\rm U}_k}}}  \sim {\cal C}{\cal N}(0,1) \in \mathbb {C}{^{1 \times 1}}$ is the small scale decay. Therefore, the aggregation channel between the $m$th AP and the $k$th UE can be expressed as
	\begin{align}
		\label{form:h_dl}
		{\begin{aligned}
				{\mathbf{h}_{mk}} \buildrel \Delta \over = {\mathbf{h}_{{{ \rm A}_m},{{ \rm U}_k}}} + \sum\nolimits_{t \in \mathcal{T}} {{\alpha _t}{{\mathbf{h}_{{{ \rm A}_m},{{ \rm T}_t}}}\mathbf{h}_{{{ \rm T}_t},{{ \rm U}_k}}}},
		\end{aligned}}
	\end{align}
	which consists of a target-independent channel and T NLOS paths reflecting off the target, where ${\alpha _t}$ is the reflection coefficient of the target $t$.
	The aggregation channel between APs can be represented as 
	\begin{align}
		{\begin{aligned}
				\mathbf{h}_{{\rm A},mn}\buildrel \Delta \over = {\mathbf{h}_{{ {\rm A}_m},{{ \rm A}_n}}} + \sum\nolimits_{t \in \mathcal{T}} {{\alpha _t}{\mathbf{h}_{{{ \rm A}_m},{{ \rm T}_t}}}{\mathbf{h}_{{{ \rm T}_t},{{ \rm A}_n}}}},\\
		\end{aligned}}
	\end{align}
	where $\mathbf{h}_{{ {\rm A}_m},{{ \rm A}_n}}={\lambda _{ {{ \rm A}_m}, {{ \rm A}_n}}}{\mathbf {q}_{ {{ \rm A}_m}, {{ \rm A}_n}}} \in \mathbb {C} {^{N \times N}}$denotes the random scattering channel between the $m$th AP and the $n$th AP, $ {\mathbf{h}_{{{ \rm T}_t}, {{ \rm A}_n}}} = {\lambda _{ {{ \rm T}_t}, {{ \rm A}_n}}}{\mathbf{q}_{  {{ \rm T}_t}, {{ \rm A}_n}}}$ denotes the channel between the $t$th target and the $n$th UL AP, $\lambda _{ {{ \rm T}_t}, {{ \rm A}_n}}=\lambda _{ {{ \rm A}_n}, {{ \rm T}_t}}$denotes the large-scale decay, ${{{\mathbf{q}}}_{{{ \rm T}_t},{{ \rm A}_n}}}$ denotes the steering vector between the $t$th target and the $n$th AP satisfying ${{\mathbf{q}}_{{{ \rm T}_t},{ {\rm A}_n}}} \sim {\cal C}{\cal N}({{\bar {\mathbf{q}}}_{{{ \rm T}_t},{{ \rm A}_n}}},\chi _{{\rm T}_t,{{{\rm A}_n}}}^2 {\mathbf {I}_N})$ where ${\bar {\mathbf{q}}_{{{ \rm T}_t},{{ \rm A}_n}}} = [1,{e^{j2\pi \frac{\mathbf{k}^{\rm T}_{tn} \mathbf{p}_{n1}}{\lambda }}},...,{e^{j2\pi \frac{\mathbf{k}^{\rm T}_{tn} \mathbf{p}_{nN}}{\lambda }}}] \in \mathbb{C} {^{1 \times N}}$,  $\mathbf {k}_{tn}={[\cos ({\phi_{nt}}),\sin ({\phi_{nt}})]^{\rm T}}$ denotes the wave vector, ${ {\phi} _{tn}}$ is the Direction of Departure (DOD) from the $t$th target and the $n$th AP.
	The aggregation channel between UEs can be represented as 
	\begin{align}
		{\begin{aligned}
				{{\rm h}_{{\rm I},u,l}} \buildrel \Delta \over = {{\rm h}_{{ {\rm U}_u},{{ \rm U}_l}}} + \sum\nolimits_{t \in \mathcal{T}} {{\alpha _t}{{\rm h}_{{{ \rm T}_t},{{ \rm U}_u}}}{{\rm h}_{{{ \rm T}_t},{{ \rm U}_l}}}},
		\end{aligned}}
	\end{align}
	where ${{\rm h}_{{ {\rm U}_u},{{ \rm U}_l}}}={\lambda _{ {{ \rm U}_k}, {{ \rm U}_u}}}{ {\rm q}_{ {{ \rm U}_k}, {{ \rm U}_u}}} \in \mathbb {C} {^{1 \times 1}}$ denotes the random scattering channel between UE $u$ and UE $k$.
	\subsection{Channel estimation}
	In this section, we perform the channel estimation for the UL and DL pilot training phase. In the UL pilot training phase, we estimate the channel between all UEs and APs for beamforming and data decoding. In the DL training phase, the channel between the DL APs and DL UEs and the channel between the DL APs and UL APs are estimated. Since all APs are connected to the CPU, the UL AP knows the signal sent by the DL AP. At the same time, it is assumed that the sensing signals sent by the DL APs are known to the DL UEs. Therefore, the CLI between the DL APs and UL APs and the CLI caused by the sensing signal to DL UEs can be partially eliminated by reconstructing the interfering signal when the channel is estimated in the DL pilot training phase. 
	\subsubsection{Autocorrelation Matrix/Coefficient}
	The aggregation channel differs from the conventional channel in that it considers the presence of the target to be sensed in the scene, resulting in changes to the correlation matrix or correlation coefficient of the channel. To facilitate the subsequent channel estimation, the following theorem provides the autocorrelation matrix/coefficient for each aggregated channel in the system.
	\begin{theorem}
		The autocorrelation matrix/coefficient of all channels present in the NAFD-based system can be expressed as
		\begin{align}
			\label{form:hmk_res}
			&\mathbb{E}\bigl[{{\mathbf{h}_{mk}}\mathbf{h}_{mk}^{\rm H}} \bigr]  = \lambda _{{{\rm A}_m},{{\rm U}_k}}^2{\mathbf{I}_N} + \sum\limits_{t \in \mathcal{T}}\alpha^2_t\bm{\zeta}_{{\rm A}_m,{\rm T}_t}\gamma_{{\rm T}_t,{\rm U}_k}={{\bm \phi}_{ml}},\\
			& \mathbb{E}\bigl[ {\mathbf{h}_{{\rm A},mn}^{}{{\mathbf{h}^{\rm H}_{{\rm A},mn}}}} \bigr]= \lambda _{{{\rm A}_m},{{\rm A}_n}}^2{\mathbf{I}_N} + \sum\limits_{t \in \mathcal{T}} {\alpha _t^2} \bm{\zeta}_{{\rm A}_m,{\rm T}_t}\bm{\zeta}_{{\rm T}_t,{\rm A}_n}= {\bm \phi}^{\rm A}_{mn},\\
			&\mathbb{E}\bigl[ {{{\bigl| {{{\rm h}_{{\rm I},u,l}}} \bigr|}^2}} \bigr] = \lambda _{{{\rm U}_u},{{\rm U}_l}}^2 + \sum\limits_{t \in \mathcal{T}} {\alpha _t^2}\gamma_{{\rm U}_u,{\rm T}_t}\gamma_{{\rm T}_t,{\rm U}_l}= {{\rm \phi}^{\rm u}_{ul}},\\
		\end{align}
		where $\bm{\zeta}_{a,b}=\bigl({{{\bar {\rm d}_{a,b}^{ -2{\alpha}}}} + \sigma _{a,b}^2}\bigr)\bigl( {{\bar {\mathbf q}}_{a,b}\bar{\mathbf q}_{a,b}^{\rm H} + \chi _{a,b}^2 {\mathbf {I}_N}} \bigr)$, $\gamma_{a,b}={{{\bar {\rm d}_{a,b}^{ -2{\alpha}}}} + \sigma _{a,b}^2}$, $a$ and $b$ represent the corresponding subscript types respectively.
	\end{theorem}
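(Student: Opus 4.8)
The plan is to prove each identity as a direct second-moment computation, exploiting the additive split of every aggregated channel \eqref{form:h_dl} into a target-independent (direct) term and $T$ target-reflected terms, together with the mutual independence of the elementary random mechanisms: the inter-node Rayleigh fading $\mathbf{q}_{{\rm A}_m,{\rm U}_k}$, $\mathbf{q}_{{\rm A}_m,{\rm A}_n}$, $q_{{\rm U}_u,{\rm U}_l}$; the position-uncertain large-scale gains $\lambda={\rm d}^{-\alpha}$; the steering vectors $\mathbf{q}_{{\rm A}_m,{\rm T}_t}$, $\mathbf{q}_{{\rm T}_t,{\rm A}_n}$ and the $\mathcal{CN}(0,1)$ target-to-UE fading $q_{{\rm T}_t,{\rm U}_k}$; and the reflection coefficients $\alpha_t$. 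I will carry out the AP--UE identity in full and then note that the AP--AP and UE--UE cases follow the same template.

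First I would expand the outer product $\mathbf{h}_{mk}\mathbf{h}_{mk}^{\rm H}$ into the self-correlations of the direct path and of each reflected path, plus all direct$\times$reflected and reflected$\times$reflected cross terms. The reduction I rely on is that every cross term has zero mean: any term carrying the direct path $\mathbf{h}_{{\rm A}_m,{\rm U}_k}$ as a single factor vanishes because that path is zero-mean Rayleigh and independent of the target paths; and for distinct targets $t\neq t'$ the cross terms vanish because the reflection coefficients are modeled as independent, zero-mean variables, so $\mathbb{E}[\alpha_t\alpha_{t'}^{*}]=0$ (equivalently, the zero-mean $q_{{\rm T}_t,{\rm U}_k}$ supplies the same cancellation in the AP--UE and UE--UE cases). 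Consequently $\alpha_t^2$ is to be read as $\mathbb{E}[|\alpha_t|^2]$, and $\mathbb{E}[\mathbf{h}_{mk}\mathbf{h}_{mk}^{\rm H}]$ collapses to $\mathbb{E}[\mathbf{h}_{{\rm A}_m,{\rm U}_k}\mathbf{h}_{{\rm A}_m,{\rm U}_k}^{\rm H}]+\sum_{t}\alpha_t^2\,\mathbb{E}\bigl[\mathbf{h}_{{\rm A}_m,{\rm T}_t}|h_{{\rm T}_t,{\rm U}_k}|^{2}\mathbf{h}_{{\rm A}_m,{\rm T}_t}^{\rm H}\bigr]$.

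Then I would evaluate the surviving terms factor by factor. The direct term uses $\mathbb{E}[\mathbf{q}_{{\rm A}_m,{\rm U}_k}\mathbf{q}_{{\rm A}_m,{\rm U}_k}^{\rm H}]=\mathbf{I}_N$ with a deterministic $\lambda_{{\rm A}_m,{\rm U}_k}$, giving $\lambda_{{\rm A}_m,{\rm U}_k}^{2}\mathbf{I}_N$. For each reflected term, independence factorizes the expectation into per-hop moments: the target-to-UE hop contributes $\mathbb{E}[|h_{{\rm T}_t,{\rm U}_k}|^{2}]=\mathbb{E}[\lambda_{{\rm T}_t,{\rm U}_k}^{2}]\,\mathbb{E}[|q_{{\rm T}_t,{\rm U}_k}|^{2}]=\gamma_{{\rm T}_t,{\rm U}_k}$, and the AP-to-target hop contributes $\mathbb{E}[\mathbf{h}_{{\rm A}_m,{\rm T}_t}\mathbf{h}_{{\rm A}_m,{\rm T}_t}^{\rm H}]=\mathbb{E}[\lambda_{{\rm A}_m,{\rm T}_t}^{2}]\,\mathbb{E}[\mathbf{q}_{{\rm A}_m,{\rm T}_t}\mathbf{q}_{{\rm A}_m,{\rm T}_t}^{\rm H}]=\gamma_{{\rm A}_m,{\rm T}_t}\bigl(\bar{\mathbf{q}}_{{\rm A}_m,{\rm T}_t}\bar{\mathbf{q}}_{{\rm A}_m,{\rm T}_t}^{\rm H}+\chi_{{\rm A}_m,{\rm T}_t}^{2}\mathbf{I}_N\bigr)=\bm{\zeta}_{{\rm A}_m,{\rm T}_t}$, where the steering second moment follows from the Rician split $\mathbf{q}=\bar{\mathbf{q}}+\tilde{\mathbf{q}}$ with $\tilde{\mathbf{q}}\sim\mathcal{CN}(0,\chi^{2}\mathbf{I}_N)$. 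Multiplying yields $\alpha_t^2\bm{\zeta}_{{\rm A}_m,{\rm T}_t}\gamma_{{\rm T}_t,{\rm U}_k}$ and hence the first identity. The AP--AP identity is the same computation with the target-to-AP hop also Rician, so its second moment is the matrix $\bm{\zeta}_{{\rm T}_t,{\rm A}_n}$ rather than the scalar $\gamma$, and the DOA/DOD steering vectors, being independent, let the four-fold product separate into $\bm{\zeta}_{{\rm A}_m,{\rm T}_t}\bm{\zeta}_{{\rm T}_t,{\rm A}_n}$; the UE--UE identity is the fully scalar specialization.

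The step that is more than bookkeeping is the characterization of the random large-scale gain $\lambda={\rm d}^{-\alpha}$ produced by the uncertain target position. I would justify $\mathbb{E}[\lambda^{2}]=\bar{\rm d}^{-2\alpha}+\sigma^{2}=\gamma$ as a second-order (mean-square-plus-variance) model of this gain rather than an exact evaluation of $\mathbb{E}[{\rm d}^{-2\alpha}]$ under a Gaussian distance, and this identification is what I expect to require the most care to state cleanly. The remaining subtlety is ensuring the independence hypotheses that license each factorization actually hold---most delicately the independence of the DOA steering vector $\mathbf{q}_{{\rm A}_m,{\rm T}_t}$ and the DOD steering vector $\mathbf{q}_{{\rm T}_t,{\rm A}_n}$ in the AP--AP term, and the independence of $\lambda$, the steering vector, and $\alpha_t$ throughout---since these are precisely the assumptions that turn a four-fold expectation into a product of elementary moments.
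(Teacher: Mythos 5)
Your proposal is correct and follows essentially the same route as the paper's proof: split each aggregated channel into its direct and per-target reflected components, drop the cross terms by independence and zero mean, and factorize each surviving term into per-hop second moments ($\lambda^2_{{\rm A}_m,{\rm U}_k}\mathbf{I}_N$, $\gamma_{a,b}$, and $\bm{\zeta}_{a,b}$). You are in fact more explicit than the paper about the two delicate points it glosses over --- the cancellation of cross-target terms (the paper simply asserts path independence, whereas a zero-mean factor such as $q_{{\rm T}_t,{\rm U}_k}$ or a zero-mean $\alpha_t$ is what actually kills them, which matters for the AP--AP case where both steering vectors have nonzero mean) and the reading of $\mathbb{E}[\lambda^2]=\bar{\rm d}^{-2\alpha}+\sigma^2$ as a second-order model rather than an exact moment of ${\rm d}^{-2\alpha}$.
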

	\begin{proof}
		Please refer to Appendix \ref {theorem_1}.
	\end{proof}
	\subsubsection{Uplink Pilot Training}
	First, in the UL pilot training phase, the UE sends UL pilot to APs, the pilot sent by the UE $k$ is denoted as $\sqrt {{\tau _{\rm p}}} {\bm {\varphi} _k} \in \mathbb {C}{^{{\tau _{\rm p}} \times 1}}$ and ${\bigl\| {{\bm {\varphi} _k}} \bigr\|^2} = 1$, the pilot signal received at the $m$th AP is denoted as
	\begin{align}
		\label{form:uplink_pilot}
		{\begin{aligned} {\mathbf{Y}_{{\rm up},m}} = \sum\limits_{k \in \kappa } {\sqrt {{p_{\rm p}}{\tau _{\rm up}}} {\mathbf{h}_{mk}}} \bm{\varphi} _k^{\rm H} + {\mathbf{N}_{{\rm up},m}}.
		\end{aligned}}
	\end{align}
	The received signal can be projected on $\bm {\varphi}$ as
	\begin{align}
		\label{form:pilotmap}
		{\begin{aligned} {\mathbf{y}_{{\rm up},mk}} \buildrel \Delta \over = \sqrt {{p_{\rm p}}{\tau _{\rm up}}} {h_{mk}} + {\mathbf{N}_{{\rm up},m}}{{\bm \varphi} _k} \in \mathbb{C} {^{N \times 1}},
		\end{aligned}}
	\end{align}
	The MMSE channel estimation technique is employed for direct estimation of the aggregation channel, the estimated channel is denoted as $\hat {\mathbf {h}}_{mk}$, and the channel estimation error is denoted as ${{\mathbf {e}}_{mk}}$, which satisfies ${{\mathbf {h}}_{mk}}={\hat {\mathbf {h}}_{mk}}+{{\mathbf {e}}_{mk}}$. Then the autocorrelation matrix of the estimated channel is expressed as 
	\begin{align}
		\label{form:h_hat_h_hatmk}
		{\begin{aligned} \mathbb{E}\bigl[\hat {\mathbf{h}}_{mk}\hat {\mathbf{h}}^{\rm H}_{ mk}   \bigr]
				=\sqrt {{p_{\rm p}}{\tau _{\rm p}}}\mathbf{C}_{ mk}{\bm \phi}_{mk} \buildrel \Delta \over =\hat{\mathbf{R}}_{mk},
		\end{aligned}}
	\end{align}
	where $\mathbf{C}_{ mk}=\frac{\sqrt {{p_{\rm p}}{\tau _{\rm p}}}{\bm{ \phi}} _{mk}}{{p_{\rm p}}{\tau_{\rm p}}{\bm \phi}_{mk}+\mathbf{I}_{p}}$, and the autocorrelation matrix of the estimation error is expressed as $\mathbb{E}\bigl[{\mathbf{e}}_{ mk}{\mathbf{e}}^{\rm H}_{ mk}   \bigr]={\bm \phi}_{mk}-\hat{\mathbf{R}}_{mk} \buildrel \Delta \over ={{\bm \theta}_{mk}}$.
	
	\subsubsection{Downlink Pilot Training}
	During the DL pilot training phase, the AP sends the DL pilots. All the DL UEs can estimate the channel between the DL APs and DL UEs. The estimation method is similar to the UL pilot transmission stage, so it will not be described here. In addition, the UL APs can estimate the channel between the APs to eliminate the CLI. The UL APs transmit the received signal to the CPU through the backhaul link. The pilot sent by the DL AP $m$ is denoted as $\sqrt {{\tau _{\rm p}}} {\bm {\varphi} _m} \in \mathbb {C}{^{{\tau _{\rm p}} \times N}}$, and the signal received by the CPU can be expressed as 
	\begin{align}
		\label{form:downlink_pilot}
		{\begin{aligned}{{\mathbf{Y}_{\rm dp}}} = \sum\limits_{m \in {\mathcal{M} _{\rm }}} {\sum\limits_{n \in {\mathcal{M}_{\rm }} }{\sqrt {{p_{\rm p}}{\tau _{\rm dp}}}{\mathbf {g}}_{{\rm A},mn}\bm{\varphi}^{\rm H}_m} }  + {\mathbf{N}_{\rm dp}},
		\end{aligned}}
	\end{align}
	where ${\mathbf {g}}_{{\rm A},mn}\buildrel \Delta \over = {x_{{\rm d},m}}{x_{{\rm u},n}}{\mathbf {h}}_{{\rm A},mn}$ indicates the channel between DL AP $m$ and UL AP $n$, ${\mathbf {N}_{{\rm dp}}}\sim {\cal C}{\cal N}(0,\sigma _{{\rm dp}}^2\mathbf {I}) \in \mathbb {C}{^{N \times N}}$. Similarly, the estimated channel between the APs can be expressed as $\hat {\mathbf {g}}_{{\rm A},mn} $, satisfying ${\mathbf {g}}_{{\rm A},mn}=\hat{\mathbf {g}}_{{\rm A},mn} +{\mathbf {e}}_{{\rm A},mn}$, where ${\mathbf {e}}_{{\rm A},mn}$ denotes the channel estimation error. Then the autocorrelation matrix of the estimated channel is expressed as $\mathbb{E}\bigl[\hat {\mathbf{g}}_{{\rm A},mn}\hat {\mathbf{g}}^{\rm H}_{{\rm A},mn}   \bigr]
	=\sqrt {{p_{\rm p}}{\tau _{\rm p}}}\mathbf{C}^{\rm A}_{mn}{\bm \phi}_{mn} \buildrel \Delta \over =\hat{\mathbf{R}}^{\rm A}_{{mn}}$, where $\mathbf{C}^{\rm A}_{mn}=\frac{\sqrt {{p_{\rm p}}{\tau _{\rm p}}}{\bm{ \phi}}^{\rm A}_{mn}}{{p_{\rm p}}{\tau_{\rm p}}{\bm \phi}^{\rm A}_{mn}+\mathbf{I}_{p}}$, and the autocorrelation matrix of the estimation error is expressed as $\mathbb{E}\bigl[{\mathbf{e}}_{{\rm A},mn} {\mathbf{e}}^{\rm H}_{{\rm A},mn}   \bigr]={\bm \phi}^{\rm A}_{mn}-\hat{\mathbf{R}}^{\rm A}_{mk} \buildrel \Delta \over ={{\bm \theta}^{\rm A}_{mk}}$.
	\subsection{Communication Signal Model}
	To determine the operating modes of the APs, two binary assignment vectors, ${\mathbf{x}_{\rm u}},{\mathbf{x}_{\rm d}} \in {\bigl\{ {0,1} \bigr\}^{M \times 1}}$, are used to formulate the mode selection signal model. Specifically, if AP $i$($j$) is used for UL reception (DL transmission), ${x_{{\rm u},i}}\bigl( {{x_{{\rm d},j}}} \bigr)$ is assigned a value of 1, otherwise it is set to 0. In this paper, it is assumed that all antennas belonging to the same AP operate in the same mode and that each AP can only be designated as either UL or DL, i.e., ${x_{{\rm u},i}} + {x_{{\rm d},i}} = 1$.
	\subsubsection{Downlink Signal Model}
	For DL transmission, in each time slot, the DL AP sends a signal to the DL UE. The DL signal sent by the $m$th AP is denoted as
	\begin{align}
		\label{form:xm}
		{\begin{aligned} {\mathbf {x}_m} = \underbrace {\sum\limits_{k \in {\mathcal {K} _{dl}}} {\sqrt {{p_{{\rm dl},k}}} } \mathbf {w}_{mk}^{\rm c} s_{{\rm dl},k}^{}}_{\rm \mbox { communication signal}} + \underbrace {\sum\limits_{t \in \mathcal {T}} {\sqrt {{p_{{\rm s},t}}} \mathbf {w}_{mt}^{\rm s}{\varphi _{m,t}}} }_{\rm \mbox { sensing signal}} \in \mathbb {C} {^{N \times 1}},
		\end{aligned}}
	\end{align}
	where ${p_{{\rm dl},k}}$ denotes the communication power of the $k$th DL UE, $\mathbf {w}_{mk}^{\rm c}$ denotes the beamforming matrix of the communication signal, and $s_{{\rm dl},k}$ is the communication signal transmitted to the $k$th DL UE satisfying ${\mathbb E}\bigl [ {s_{{\rm dl},k}^{\rm H}} s_{{\rm dl},k}^{} \bigr] = 1$, ${{p_{{\rm s},t}}}$ denotes the power of the sensed signal assigned to the $t$th sensing target, $\mathbf {w}_{mt}^{\rm s}$ denotes the beamforming matrix of the sensed signal, and ${\varphi _{m,t}}$ denotes the sensing signal sent by the $m$th AP to perceive the $t$th target, satisfying ${\mathbb E}\bigl [ {{\varphi}_{m,t}^{\rm H}{\varphi} _{m,t}^{}} \bigr] = 1$. Then the signal received by the $l$th DL UE can be expressed as
	\begin{align}
		\label{form:y_dl}
		{\begin{aligned} y_l^{\rm dl} = \sum\limits_{m \in {\mathcal{M} _{}}} {{x_{{\rm d},m}}\mathbf{h}_{ml}^{\rm H}{\mathbf{x}_m}}  + \sum\limits_{u \in {\mathcal{K} _{ul}}} {\sqrt {{p_{{\rm ul},u}}} {{\rm h}_{{\rm I},u,l}}s_{{\rm ul},u}^{}}  + {n_{\rm dl}},
		\end{aligned}}
	\end{align}
	where $s_{{\rm ul},u}^{}$ denotes the UL data signal of the UL UE, ${p_{{\rm ul},u}}$ denotes the data transmission power of the $u$th UL UE, and $n_{{\rm dl}} \sim {\cal C}{\cal N}(0,\sigma^{2}_{\rm dl}) $ is the DL additive white Gaussian noise(AWGN). For DL communication, the CLI includes the interference of sensing signals sent by the DL ap to the DL UE, i.e., sensing-to-communication CLI and the interference of UL data signals to the DL UEs, i.e., UL-to-DL CLI. Assuming the sensing signal is known to the DL UEs, since the channel estimation has been performed during the pilot training phase, the sensing-to-communication CLI can be partially mitigated by reconstructing the sensing signal. Considering the channel estimation error, the remain received signal is expressed as
	\begin{align}
		\label{form:y_dl_e}
		{\begin{aligned} y_l^{\rm dl} = {\mathcal{D}_l} + \mathcal{N}_l^{\rm dl} + \mathcal{N}_l^{\rm error} + \mathcal{N}_l^{\rm CLI-s} + \mathcal{N}_l^{\rm CLI-c} + {n_{\rm dl}},
		\end{aligned}}
	\end{align}
	where ${\mathcal{D}_l} \buildrel \Delta \over = \sum\nolimits_{m \in {\mathcal{M} _{}}}^{} {\sqrt {{p_{{\rm dl},l}}} }\hat{\mathbf{h}}^{\rm H}_{{\rm dl},ml}\mathbf{w}_{ml}^{\rm c}s_{{\rm dl},l}$ denotes the desired effective DL signal, $\hat{\mathbf{h}}^{\rm }_{{\rm dl},ml}\buildrel\Delta \over = {x_{{\rm d},m}}\hat {\mathbf{h}}_{ml}^{\rm }$ denotes the DL estimated 
	channel between AP $m$ and UE $l$, $\mathcal{N}_l^{\rm dl} \buildrel \Delta \over = \sum\nolimits_{l' \in {\mathcal{K} _{\rm dl}}\backslash \{ l\} }^{} {\sum\nolimits_{m \in {\mathcal{M}  _{}}}^{} {\sqrt {{p_{{\rm dl},l'}}}\hat {\mathbf{h}}_{{\rm dl},ml}^{\rm H}\mathbf{w}_{ml'}^{\rm c}s_{{\rm dl},l}}}$ denotes the communication interference from other DL UEs to the $l$th UE, $\mathcal{N}_l^{\rm error} \buildrel \Delta \over = \sum\nolimits_{k \in {\mathcal{K}_{\rm dl}}} {\sum\nolimits_{m \in {\mathcal{M} _{}}}^{} {\sqrt {{p_{{\rm dl},k}}}\mathbf{e}_{{\rm dl},ml}^{\rm H}\mathbf{w}_{mk}^{\rm c}s_{{\rm dl},l}} }$ denotes the interference generated by the DL channel estimation error, $\mathbf{e}_{{\rm dl},ml}\buildrel \Delta \over ={x_{{\rm d},m}} \mathbf{e}_{ml}$ denotes the DL channel estimation error of the channel between AP $m$ and UE $l$, $\mathcal{N}_l^{\rm CLI-s} \buildrel \Delta \over = \sum\nolimits_{m \in \mathcal{M}} \sum\nolimits_{t \in \mathcal{T}} {\sqrt {{p_{{\rm s},t}}}{\mathbf{e}_{{\rm dl},ml}^{\rm H} \mathbf{w}_{mt}^{\rm s}{\varphi _{m,t}}} }$ denotes the residual sensing-to-communication interference, $\mathcal{N}_l^{\rm CLI-c} \buildrel \Delta \over = \sum\nolimits_{u \in {\mathcal{K} _{\rm ul}}} {\sqrt {{p_{{\rm ul},u}}} {{\rm h}_{{\rm I},u,l}}s_{{\rm ul},u}}$ denotes the UL-to-DL CLI.
	\subsubsection{Uplink Signal Model}
	For UL transmission, each UL UE sends a signal, while the DL APs send communication and sensing signals, so the signal received by the $n$th AP is denoted as
	\begin{align}
		\label{form:y_ul}
		{\begin{aligned}\mathbf{Y}_n^{\rm ul} = \sum\limits_{k \in {\mathcal{K} _{\rm ul}}} {{{x_{{\rm u},m}}\mathbf{h}_{mk}}s_{{\rm ul},k} + \sum\limits_{m \in {\mathcal{M} _{}}}{\mathbf {g}}_{{\rm A},mn}{\mathbf{x}_m}}  + \mathbf{n}_{{\rm ul},n},
		\end{aligned}}
	\end{align}
	where $s_{{\rm ul},k}$ denotes the UL data, and $\mathbf {n}_{{\rm ul},n} \sim \mathcal {C}\mathcal {N}(0,\sigma^{2}_{\rm ul}\mathbf {I})$ denotes the UL AWGN.
	The signals received from each AP are processed centrally at the CPU. For UL communication, the CLI includes the interference of the sensing and communication signals sent by the DL APs to the UL APs, i.e., DL-to-UL CLI. Since signals transmitted by the DL APs are familiar to the UL APs, and the channels between the DL APs and the UL APs are estimated during the DL pilot training phase, it is possible to reconstruct the signals transmitted by the DL APs. Therefore, the DL-to-UL CLI can be partially mitigated. Consequently, the signal of UE $u$ at the CPU can be mathematically represented as
	\begin{align}
		\label{form:r_u}
		{\begin{aligned}r_u^{\rm ul} = {\mathcal{D}_u} + \mathcal{N}_u^{\rm ul} + \mathcal{N}_u^{\rm error} + \mathcal{N}_u^{\rm CLI-c} + \mathcal{N}_u^{\rm CLI-s} + \mathcal{N}_u^{\rm noise},
		\end{aligned}}
	\end{align}
	where ${\mathcal{D}_u} \buildrel \Delta \over = \sum\nolimits_{n \in {\mathcal{M}}}^{}{\mathbf{v}_{nu}^{\rm H}{{\hat {\mathbf h}}_{{\rm ul},mu}}\sqrt {{p_{{\rm ul},u}}} }s_{{\rm ul},u}$ denotes the desired signal of the $u$th UE, ${{\hat {\mathbf h}}_{{\rm ul},mu}}\buildrel \Delta \over = {x_{{\rm u},n}}{{\hat {\mathbf h}}_{mu}}$ denotes the UL estimated channel between the $n$th AP $n$ and the $u$th UE, $\mathbf {v}_{nu}^{\rm H}$ denotes the receiver vector, $\mathcal{N}_u^{\rm ul} \buildrel \Delta \over =  \sum\nolimits_{u' \in {\mathcal{K}_{\rm ul}}\backslash \{ u\} }^{} {\sum\nolimits_{n \in {\mathcal{M} _{}}}^{} \sum\nolimits_{t \in \mathcal{T}}^{}{\mathbf{v}_{nu}^{\rm H}{{\hat {\mathbf h}}_{{\rm ul},mu}}\sqrt {{p_{{\rm ul},u'}}} } }s_{{\rm ul},u'}$ denotes the UL-interference caused by the other UL UEs, $\mathcal{N}_u^{\rm error} \buildrel \Delta \over =  \sum\nolimits_{k \in {\mathcal{K} _{\rm ul}}} {\sum\nolimits_{n \in {\mathcal{M}_{\rm }}}\sum\nolimits_{t \in \mathcal{T}}^{} {x_{{\rm u},n}}{\mathbf{v}_{nu}^{\rm H}{\mathbf{e}_{{\rm ul},nk}}\sqrt {{p_{{\rm ul},k}}}s_{{\rm ul},k}}}$ denotes the interference term due to the UL channel estimation error, ${\mathbf{e}_{{\rm ul},nk}}\buildrel \Delta \over ={{x_{{\rm u},n}}\mathbf{e}_{nk}}$ denotes the UL channel estimation error between the $n$th AP and the $k$th UL UE, $\mathcal{N}_u^{\rm CLI-c} \buildrel \Delta \over = \sum\nolimits_{m \in {\mathcal{M} _{}}}^{} {\sum\nolimits_{n \in {\mathcal{M} _{}}^{}}\sum\nolimits_{j \in {\mathcal{K} _{\rm dl}}}^{}}{\mathbf{v}_{nu}^{\rm H}{\mathbf{e}}_{{\rm A},mn}{\sqrt {{p_{{\rm dl},j}}} } \mathbf{w}_{mj}^{\rm c}s_{{\rm dl},j}}$ denotes the residual DL-to-UL CLI term of the communication signal, $\mathcal{N}_u^{\rm CLI-s} \buildrel \Delta \over =  \sum\nolimits_{m \in {\mathcal{M}}}^{}\sum\nolimits_{n \in {\mathcal{M}}}^{}\sum\nolimits_{t \in \mathcal{T}}^{}{\mathbf{v}_{nu}^{\rm H}{\mathbf{w}_{nt}^s\sqrt {{p_{{\rm s},t}}} {\varphi _{m,t}}}}$ denotes the residual DL-to-UL CLI term of the sensing signal, $\mathcal{N}_u^{\rm noise}\buildrel \Delta \over = \sum\nolimits_{n \in {\mathcal{M} _{}}}{x_{{\rm u},n}} {\mathbf{v}_{nu}^{\rm H}\mathbf{n}_n^{\rm ul}}$ denotes the product of UL Gaussian noise and receiver vector.
	\subsection{Sensing Signal Model}
	Analyzing from the sensing point of view, all DL APs send the communication signals together with the sensing signals. Typically, the orientation of the sensing beams are different from that of the communication beams to satisfy the sensing field of view, and the independent communication and sensing sequences provide a high correlation gain to differentiate the radar module whose goal is to estimate the position of the target. Moreover, in CF network, the APs serving a UE all transmit the data signals of that UE which leads to simultaneous same-frequency interference\cite{Mao2023An}. This will seriously limit the sensing performance if the communication signals are reused for sensing. Therefore, in this paper, we use the reflected echoes of the sensing beams to sense the unknown target. Therefore, the communication signals are interference to sensing.
	
		In the NAFD-based ISAC, all APs are connected to the CPU. As a result, the DL communication signals can be reconstructed by estimating the channels between the APs. This enables the signal received by the UL APs to be subtracted from the DL communication signals, while a similar operation can be performed for the UL communication signals. Assuming the UL communication data can be decoded correctly. Once the UL communication signals are extracted, their effect can be subtracted from the received signal. Therefore, following the above signal interference cancellation, the remaining signal at the AP $n$ can be expressed as
	\begin{align}
		\label{form:sense_rec}
		{\begin{aligned}{\mathbf {y}_n} =& {\sum\limits_{m \in \mathcal {M}}\sum\limits_{ t \in \mathcal {T}}^{}  {{\sqrt {{p_{{\rm s},t}}} }{\mathbf {g}}_{{\rm A},mn}\mathbf {w}_{mt}^{\rm s}\varphi _{m,t}^{}}}+ \bar{\mathbf{y}}_{{\rm dl},n}+\bar{\mathbf{y}}_{{\rm ul},n}+{\mathbf{n}_s},
		\end{aligned}}
	\end{align}
	where $\bar {\mathbf {y}}_{{\rm dl},n}=\sum\nolimits_{m \in {\mathcal{M} _{\rm }}}^{} \sum\nolimits_{j \in {\mathcal{K} _{\rm dl}}}^{}{ {{\mathbf{e}}_{{\rm A},mn}^{\rm }{\sqrt {{p_{{\rm dl},j}}} } \mathbf{w}_{mj}^{\rm c}}}$ denotes the DL communication interference cancellation residual term, $\bar {\mathbf {y}}_{{\rm ul},n}=\sum\nolimits_{k \in {\mathcal{K} _{\rm ul}}}  {\sqrt {{p_{{\rm ul},k}}} {\rm \mathbf{e}}_{{\rm ul},nk}}$ denotes the UL communication interference cancellation residual term, ${\mathbf{n}_s} \sim \mathcal{CN}(0,\sigma^2_s)$ denotes the sensing AWGN. Expanding the aggregated channel of the above equation yields the NLOS radial component reflected by the target and the target-independent multipath component, commonly known as clutter. Clutter is typically caused by scattering from permanent objects and temporary obstacles\cite{Liu2023Cluster}. The accurate modeling of clutter can be achieved through the long-term observation of the environment\cite{Rahman2020Framework}. Therefore, we assume that the clutter signal can be perfectly canceled, and the remaining received signal is denoted as
	\begin{align}
		\label{form:sense_h_nlos}
		{\begin{aligned}{\mathbf{y}_n} = {\sum\limits_{m \in \mathcal{M}_{}}}\sum\limits_{ t \in \mathcal{T}}^{} {{\sqrt {{p_{{\rm s},t}}} }\bar {\mathbf{g}}^{\rm}_{{\rm A},mn}\mathbf{w}_{mt}^{\rm s}\varphi _{m,t}^{}}+{\mathbf{z}},
		\end{aligned}}
	\end{align}
	where $\bar {\mathbf{g}}^{\rm }_{{\rm A},mn}\buildrel \Delta \over =\sum\nolimits_{ i \in \mathcal{T}}^{}{\alpha _i}{x_{{\rm d},m}}{x_{{\rm u},n}}{\mathbf{h}_{{{ \rm A}_m},{{\rm T}_i}}}{\mathbf{h}_{{{ \rm T}_i},{{ \rm A}_n}}}$ indicates the channel that the DL AP $m$ reflects through the targets to the UL AP $n$, $\mathbf{z}$ is the combination of sensing AWGN and residual UL and DL interference, modeled as a complex Gaussian distribution with covariance \cite{Sakhnini2021Bound}, and is denoted as
	\begin{align}
		\label{form:noise}
		{\rm cov}\bigl [{\mathbf z}\bigr]=\bigl[{\bigl\|{\bar {\mathbf { y}}}_{{\rm dl},n}\bigr\|^2+\bigl\|{\bar {\mathbf { y}}}_{{\rm ul},n}\bigr\|^2+\sigma^2_{{\rm s}}}\bigr]\mathbf{I}_{N\times1}. 
	\end{align}
	The remaining signals are individually mapped to their corresponding sensing symbols, and is denoted as
	\begin{align}
		\label{form:sense_h_nlos_single}
		{\begin{aligned}{\mathbf{y}_{m,n,t}} ={{\sqrt {{p_{{\rm s},t}}} }{\alpha _t}{\mathbf{h}_{{{ \rm A}_m},{{\rm T}_t}}}{\mathbf{h}_{{{ \rm T}_t},{{ \rm A}_n}}}\mathbf{w}_{mt}^{\rm s}}+\mathbf{z}\varphi _{m,t}^{\rm H}.
		\end{aligned}}
	\end{align}
	The sensing processing is performed based on the above formula.
	\section{Closed-form Expression Derivation}
	In this section, we derive the closed-form expressions for the communication rate and sensing performance of the system. We utilize the maximum ratio transmission (MRT) beamforming, i.e. $\mathbf{w}_{ml}^{\rm c}= \varepsilon _{l}^{}{{\hat {\mathbf{h}}_{ml}}}$, the maximal ratio combining (MRC) receiver, i.e. $\mathbf{v}_{nu}= \varepsilon _{u}^{}{{\hat {\mathbf{h}}_{nu}}},$ and conjugate-aware sensing beamforming\cite{demirhan2024cellfree}, i.e. $\mathbf{w}_{mt}^{\rm s}= \sqrt{{1}/{N}} {{\mathbf{q}_{{{ \rm A}_m},{{ \rm T}_t}}}}$ in the system, where $\varepsilon _{l}^{}$ and $\varepsilon _{u}^{}$ are the normalization coefficient.
	\subsection{Downlink Communication Rate}
	The DL communication rate of UE $l$ can be expressed as
	\begin{align}
		\label{form:rate_dl}
		{\begin{aligned} R_{l}^{\rm dl} = \bigl(1-\frac{\tau_{\rm dp}+\tau_{\rm up}}{\tau}\bigr)\mathbb{E}\bigl[ {\log_2 (1 + \gamma _{l}^{\rm dl})} \bigr],
		\end{aligned}}
	\end{align}
	where $\gamma^{\rm dl}_l$ represents the SINR of the signal received by the DL UE, which can be expressed as
	\begin{align}
		\label{form:sinr_dl}
		{\begin{aligned}\gamma _{l}^{\rm dl}=\frac{{ {{{\bigl| {{\mathcal{D}_l}} \bigr|}^2}}}}{{{{{\bigl| {\mathcal{N}_l^{\rm dl}} \bigr|}^2}} + {{{\bigl|  \mathcal{N}_l^{\rm error}  \bigr|}^2}} + {{{\bigl|\mathcal{N}_{l}^{\rm CLI-s}\bigr|}^2}} + {{{\bigl|\mathcal{N}_l^{\rm CLI-c} \bigr|}^2}}+ \sigma _{dl}^2}}.
		\end{aligned}}
	\end{align}
	As indicated by Eq. (\ref{form:rate_dl}), the DL communication rate formula involves an expectation, which can be resolved by incorporating the expectation into $\gamma _l$ and solving for each expectation separately. The derivation result is presented in the following theorem.
	\begin{theorem}
		\label{the:dl_rate}
		Considering the interference cancellation mechanism, the DL communication rate for the NAFD-based ISAC system with MRT beamforming and MRC receiver can be derived as a closed-form expression, given by
		\begin{align}
			\label{form:rate_dl_res}
			{\begin{aligned} R_{l}^{\rm dl} = \bigl(1-\frac{\tau_{\rm dp}+\tau_{\rm up}}{\tau}\bigr) {\log_2 (1 + \mathbb{E}\bigl[\gamma _l^{\rm dl}\bigr])},
			\end{aligned}}
		\end{align}
		where $\gamma _l$ represents the SINR of the signal received by the DL UE, which can be expressed as
		\begin{align}
			\label{form:r_dl_res_res} 
			{\begin{aligned}\mathbb{E}\bigl[\gamma _l^{\rm dl}\bigr]=\frac{{p_{\rm dl,l}}\varepsilon _l^{2}\bigl[{\sum\limits_{m \in {\mathcal{M} _{}}}^{} {x_{{\rm d},m}}{{{\hat k}_{ml}}\hat \theta _{ml}^2 + {{(\sum\limits_{m \in {\mathcal{M} _{\rm }}}^{} {x_{{\rm d},m}}{{{\hat k}_{ml}}\hat \theta _{ml}^{}} )}^2}} }\bigr]}{\mathcal{I}^{\rm inter}_{{\rm dl},l}+\mathcal{I}^{\rm e}_{{\rm dl},l}+\mathcal{I}^{\rm s}_{{\rm dl},l}+\mathcal{I}^{\rm CLI}_{{\rm dl},l}+\sigma^2_{\rm dl}},
			\end{aligned}}
		\end{align}
		where $\mathcal{I}^{\rm inter}_{{\rm dl},l}=\sum\nolimits_{l' \in {\mathcal{K} _{\rm dl}}\backslash \{ l\} }^{} \sum\nolimits_{m \in {\mathcal{M} _{}}}^{}p_{{\rm dl},l'}\varepsilon^2_{{{\rm dl},l'}}{{{\rm tr}({{{{ {\hat {\mathbf{R}}}}^{\rm dl}_{ml}}}}{{{{\hat {\mathbf{R}}}^{\rm dl}_{m{l'}}}}})}}$, $\mathcal{I}^{\rm error}_{{\rm dl},l}=\sum\nolimits_{k \in {\mathcal{K} _{dl}}}\sum\nolimits_{m \in {\mathcal{M} _{\rm }}}p_{{\rm dl},k}{\varepsilon^2_{{\rm dl},k}}{\rm tr}(\hat {\mathbf{R}}^{\rm dl}_{mk} {\bm{\theta}}^{\rm dl}_{ml})$, $\mathcal{I}^{\rm sense}_{{\rm dl},l}=\sum\nolimits_{t \in \mathcal{T}}\sum\nolimits_{m \in {\mathcal{M} _{\rm }}}\frac{{{p_{{\rm s},t}}}}{N}{\rm tr}({\bm{\psi} _{mt}}{\bm{\theta}}^{\rm dl}_{ml})$, $\mathcal{I}^{\rm CLI}_{{\rm dl},l}=\sum\nolimits_{u \in {\mathcal{K} _{\rm ul}}}{{p_{{\rm ul},u}}}{{\rm \phi}^{u}_{ul}}$.
	\end{theorem}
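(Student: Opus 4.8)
The plan is to begin from the exact achievable-rate expression in Eq.~(\ref{form:rate_dl}) and reduce the intractable $\mathbb{E}[\log_2(1+\gamma_l^{\rm dl})]$ to the closed form of Eq.~(\ref{form:rate_dl_res}) by exploiting the channel-hardening effect of the cell-free architecture. With $N$ antennas per AP and coherent combining across many DL-active APs, the effective desired gain concentrates around its mean, so $\gamma_l^{\rm dl}$ is well approximated by a deterministic quantity; this justifies both moving the expectation inside the logarithm and evaluating $\mathbb{E}[\gamma_l^{\rm dl}]$ as the ratio of the expected desired-signal power to the expected interference-plus-noise power, $\mathbb{E}[\gamma_l^{\rm dl}]\approx \mathbb{E}[|\mathcal{D}_l|^2]/(\mathbb{E}[|\mathcal{N}_l^{\rm dl}|^2]+\mathbb{E}[|\mathcal{N}_l^{\rm error}|^2]+\mathbb{E}[|\mathcal{N}_l^{\rm CLI-s}|^2]+\mathbb{E}[|\mathcal{N}_l^{\rm CLI-c}|^2]+\sigma_{\rm dl}^2)$. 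It then suffices to evaluate these five second-order moments after substituting the MRT beamformer $\mathbf{w}_{ml}^{\rm c}=\varepsilon_l\hat{\mathbf{h}}_{ml}$ and the sensing beamformer $\mathbf{w}_{mt}^{\rm s}=\sqrt{1/N}\,\mathbf{q}_{{\rm A}_m,{\rm T}_t}$ into the terms of Eq.~(\ref{form:y_dl_e}).

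The core of the derivation is the desired-signal power. After substitution, $\mathcal{D}_l=\sqrt{p_{{\rm dl},l}}\,\varepsilon_l\sum_m x_{{\rm d},m}\|\hat{\mathbf{h}}_{ml}\|^2 s_{{\rm dl},l}$, so I must compute the second moment of $\sum_m x_{{\rm d},m}\|\hat{\mathbf{h}}_{ml}\|^2$. Here I would use that the per-AP estimates $\hat{\mathbf{h}}_{ml}$ are independent across $m$ and split the second moment into a variance part plus a squared-mean part, noting $x_{{\rm d},m}^2=x_{{\rm d},m}$ since the mode indicator is binary. The essential modelling input is that each $\|\hat{\mathbf{h}}_{ml}\|^2$ is approximated by a Gamma variable $\Gamma(\hat{k}_{ml},\hat{\theta}_{ml})$ through first-two-moment matching, giving mean $\hat{k}_{ml}\hat{\theta}_{ml}$ and variance $\hat{k}_{ml}\hat{\theta}_{ml}^2$; summing over the DL-active APs reproduces exactly the two-term numerator $\sum_m x_{{\rm d},m}\hat{k}_{ml}\hat{\theta}_{ml}^2+(\sum_m x_{{\rm d},m}\hat{k}_{ml}\hat{\theta}_{ml})^2$ of Eq.~(\ref{form:r_dl_res_res}).

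For the denominator I would handle each interference term in turn, repeatedly invoking that (i) the data and sensing symbols are zero-mean, unit-power and mutually independent, so all cross terms between distinct symbols vanish, and (ii) by the MMSE orthogonality principle $\hat{\mathbf{h}}_{ml}$ and $\mathbf{e}_{ml}$ are uncorrelated with covariances $\hat{\mathbf{R}}_{ml}^{\rm dl}$ and $\bm{\theta}_{ml}^{\rm dl}$. The inter-user term collapses to $\mathbb{E}[|\hat{\mathbf{h}}_{ml}^{\rm H}\hat{\mathbf{h}}_{ml'}|^2]=\mathrm{tr}(\hat{\mathbf{R}}_{ml}^{\rm dl}\hat{\mathbf{R}}_{ml'}^{\rm dl})$ for independent zero-mean estimates; the estimation-error term yields $\mathrm{tr}(\hat{\mathbf{R}}_{mk}^{\rm dl}\bm{\theta}_{ml}^{\rm dl})$; the residual sensing term combines the sensing beamformer's second moment $\bm{\psi}_{mt}$ with the error covariance to give $\tfrac{p_{{\rm s},t}}{N}\mathrm{tr}(\bm{\psi}_{mt}\bm{\theta}_{ml}^{\rm dl})$; and the UL-to-DL term reduces, via the UE-to-UE aggregated-channel variance $\phi^{\rm u}_{ul}$ supplied by Theorem~1, to $\sum_u p_{{\rm ul},u}\phi^{\rm u}_{ul}$. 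Adding these four contributions to $\sigma_{\rm dl}^2$ assembles the stated denominator.

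The step I expect to be the main obstacle is the desired-signal second moment, since it is the only place where the non-Gaussian nature of the aggregated channel genuinely matters: $\|\hat{\mathbf{h}}_{ml}\|^2$ is a sum of correlated squared entries whose fourth-order behaviour is not captured by the covariance $\hat{\mathbf{R}}_{ml}^{\rm dl}$ alone. The Gamma moment-matching is the device that renders this tractable, and the delicate point is to retain the variance contribution $\hat{k}_{ml}\hat{\theta}_{ml}^2$ as a separate summand rather than have it absorbed into the coherent squared-mean term as happens under a plain use-and-then-forget bound. By contrast, the four denominator terms are routine once the symbol-independence and MMSE-orthogonality facts and the trace identities are in place.
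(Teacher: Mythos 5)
Your proposal follows essentially the same route as the paper's proof: move the expectation inside the SINR, evaluate the desired-signal second moment by Gamma moment-matching of $\hat{\mathbf{h}}_{{\rm dl},ml}^{\rm H}\hat{\mathbf{h}}_{{\rm dl},ml}$ per AP and of the sum across APs (the paper's Lemmas 1 and 2), yielding the variance-plus-squared-mean numerator, and then reduce each denominator term to the stated traces using symbol independence, MMSE orthogonality, and the steering-vector second moment $\bm{\psi}_{mt}$ together with Theorem~1 for the UL-to-DL term. The argument is correct and matches the paper's derivation in both structure and the key technical device.
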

	\begin{proof}
		Please refer to Appendix \ref {theorem_2}.
	\end{proof}
	\subsection{Uplink Communication Rate}
	The communication rate of UL UE $u$ is as follows
	\begin{align}
		\label{form:rate_ul}
		{\begin{aligned}
				R_{u}^{\rm ul} = \bigl(1-\frac{\tau_{\rm dp}+\tau_{\rm up}}{\tau}\bigr)\mathbb{E}\bigl[ {\log_2 (1 + \gamma _{u}^{\rm ul })} \bigr]
		\end{aligned}},
	\end{align}
	where $\gamma _{u}^{}$ is the SINR of the UL UE $u$, expressed as
	\begin{align}
		\label{form:sinr_ul}
		{\begin{aligned}\gamma _{u}^{\rm ul }=\frac{\bigl|{\mathcal{D}_u} \bigr|^2}{\bigl|\mathcal{N}_u^{\rm ul}\bigr|^2 +\bigl| \mathcal{N}_u^{\rm e}\bigr|^2 + \bigl|\mathcal{N}_u^{\rm CLI-c}\bigr|^2 + \bigl|\mathcal{N}_u^{\rm CLI-s}\bigr|^2 + \bigl| \mathcal{N}_u^{\rm n}\bigr|^2}.
		\end{aligned}}
	\end{align}
	Similarly, by incorporating the expectation into $\gamma _{u}^{\rm ul }$ and solving each expectation separately, we can derive the closed-form expression for the UL communication rate of the system. The results of the derivation are given in the following theorem.
	\begin{theorem}
		\label{the:ul_rate}
		Considering the interference cancellation mechanism, the UL communication rate for the NAFD-based ISAC system with MRT beamforming and MRC receiver can be derived as a closed-form expression, given by
		\begin{align}
			\label{form:rate_ul_res}
			{\begin{aligned}
					R_{u}^{\rm ul} = \bigl(1-\frac{\tau_{\rm dp}+\tau_{\rm up}}{\tau}\bigr) {\log_2 (1 + \mathbb{E}\bigl[\gamma _u^{\rm ul}\bigr])} 
			\end{aligned}},
		\end{align}
		where $\gamma _{u}^{}$ is the SINR of the UL UE $u$, expressed as
		\begin{align}
			\label{form:sinr_ul_res}
			{\begin{aligned}\mathbb{E}\bigl[\gamma _u^{\rm ul}\bigr]=\frac{{{p_{{\rm ul},u}}}\varepsilon _{{\rm }u}^{2}\bigl[{\sum\limits_{n \in {\mathcal{M} _{\rm }}}^{} {x_{{\rm u},n}}{ \hat k_{nu}^{} \hat \theta _{nu}^2 + } (\sum\limits_{n \in {\mathcal{M}_{}}}^{} { {x_{{\rm u},n}}\hat k_{nu}^{} \hat \theta _{nu}^{}})^2}\bigr]}{\mathcal{I}^{\rm inter}_{{\rm ul},u}+\mathcal{I}^{\rm e}_{{\rm ul},u}+\mathcal{I}^{\rm CLI-c}_{{\rm ul},u}+\mathcal{I}^{\rm CLI-s}_{{\rm ul},u}+\mathcal{I}^{\rm n}_{{\rm ul},u}},
			\end{aligned}}
		\end{align}
		where $\mathcal{I}^{\rm inter}_{{\rm ul},u}=\sum\nolimits_{u' \in {\mathcal{K}_{\rm ul}}\backslash \{ u\} }^{}\sum\nolimits_{n \in {\mathcal{M} _{\rm }}}{p_{{\rm ul},u}}\varepsilon _{{\rm ul},u}^{2}{{{\rm tr}({{{{\hat {\mathbf{R}}}^{\rm ul}_{nu}}}}{{{{\hat {\mathbf{R}}}^{\rm ul}_{n{u'}}}}})}}$, $\mathcal{I}^{\rm CLI-c}_{{\rm ul},u}=\sum\nolimits_{m \in {\mathcal{M} _{\rm }}}^{}\sum\nolimits_{n \in {\mathcal{M} _{\rm }}}^{} \sum\nolimits_{j \in {\mathcal{K} _{\rm dl}}}^{}\rho_{{\rm c},uj}{\rm tr}(\bm{\theta}^{\rm A}_{mn}\hat {\mathbf{R}}^{\rm ul}_{nu}\hat {\mathbf{R}}^{\rm dl}_{mj})$, $\mathcal{I}^{\rm CLI-s}_{{\rm ul},u}=\sum\nolimits_{m \in {\mathcal{M} _{\rm }}}^{}\sum\nolimits_{n \in {\mathcal{M} _{\rm }}}^{}\sum\nolimits_{t \in \mathcal{T}}{p_{{\rm ul},u}}\frac{\varepsilon_{{\rm ul},u}^{2}}{N}{{\rm tr}\bigl({\bm{\theta}_{mn}^{\rm A}\hat{\mathbf{R}}^{\rm ul}_{nu}\bm{\psi}_{mt}}\bigr)}$, $\mathcal{I}^{\rm error}_{{\rm ul},u}=\sum\nolimits_{k \in {\mathcal{K} _{\rm ul}}} {\sum\nolimits_{n \in {\mathcal{M}_{\rm }}}}{p_{{\rm ul},k}}\varepsilon_{{\rm ul},u}^{2}{\rm tr}(\hat {\mathbf{R}}^{\rm ul }_{nu}\hat {\bm{\theta}}^{\rm ul}_{nk})$, $\mathcal{I}^{\rm noise}_{{\rm ul},u}=\sigma^2_{\rm ul}\varepsilon_{{\rm ul},u}^{2}\sum\nolimits_{n \in {\mathcal{M} _{\rm}}}{\rm tr}\bigl(\hat{\mathbf{R}}^{\rm ul}_{nu}\bigr)$.
	\end{theorem}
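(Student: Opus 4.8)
The plan is to derive the closed-form expression for $\mathbb{E}\bigl[\gamma_u^{\rm ul}\bigr]$ by computing the expectation of each term in the numerator and denominator of the SINR in (\ref{form:sinr_ul}) separately, exploiting the independence between the estimated channels, the estimation errors, and the data symbols. First I would substitute the MRC receiver $\mathbf{v}_{nu}=\varepsilon_u \hat{\mathbf{h}}_{nu}$ into the desired-signal term $\mathcal{D}_u$ and compute $\mathbb{E}\bigl[|\mathcal{D}_u|^2\bigr]$. Since $\mathcal{D}_u$ is a coherent sum over APs of $\mathbf{v}_{nu}^{\rm H}\hat{\mathbf{h}}_{{\rm ul},nu}$, this expands into a ``variance plus mean-squared'' structure: the diagonal terms produce $\sum_n x_{{\rm u},n}\hat{k}_{nu}\hat\theta_{nu}^2$ while the cross terms between distinct APs produce the squared sum $(\sum_n x_{{\rm u},n}\hat{k}_{nu}\hat\theta_{nu})^2$, matching the bracketed numerator. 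The key identity here is that for the MRC combiner, $\mathbb{E}\bigl[\|\hat{\mathbf{h}}_{nu}\|^2\bigr]$ and its fourth-moment analogue are expressible through $\hat{\mathbf{R}}_{nu}$ from (\ref{form:h_hat_h_hatmk}), which I would invoke via Theorem~1.

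Next I would treat the five denominator terms. For the inter-user interference $\mathcal{N}_u^{\rm ul}$, the channel-estimation-error term $\mathcal{N}_u^{\rm error}$, and the noise term $\mathcal{N}_u^{\rm noise}$, the computation is standard: because the estimated channel $\hat{\mathbf{h}}_{nu}$ is independent of both the interfering user's channel and the error $\mathbf{e}_{nk}$ (a consequence of MMSE orthogonality), each expectation collapses to a trace of a product of autocorrelation matrices, giving $\mathcal{I}^{\rm inter}_{{\rm ul},u}$, $\mathcal{I}^{\rm error}_{{\rm ul},u}$, and $\mathcal{I}^{\rm noise}_{{\rm ul},u}$ respectively. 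The orthogonality $\mathbb{E}[\hat{\mathbf{h}}\mathbf{e}^{\rm H}]=0$ is what kills all the cross terms and leaves only the ``power'' contributions.

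The more delicate terms are the two residual cross-link interference contributions, $\mathcal{N}_u^{\rm CLI-c}$ and $\mathcal{N}_u^{\rm CLI-s}$, since these involve the residual inter-AP estimation error $\mathbf{e}_{{\rm A},mn}$ sandwiched between the receiver $\mathbf{v}_{nu}$ and the DL beamformer/sensing vector. Here I would use that $\mathbf{e}_{{\rm A},mn}$ is independent of $\hat{\mathbf{h}}_{nu}$, of the DL beamformer $\mathbf{w}_{mj}^{\rm c}$, and of the sensing steering vector, so that the expectation factorizes into a product of traces involving $\bm{\theta}_{mn}^{\rm A}$, $\hat{\mathbf{R}}_{nu}^{\rm ul}$, and either $\hat{\mathbf{R}}_{mj}^{\rm dl}$ or the sensing covariance $\bm{\psi}_{mt}$, yielding $\mathcal{I}^{\rm CLI-c}_{{\rm ul},u}$ and $\mathcal{I}^{\rm CLI-s}_{{\rm ul},u}$. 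I expect the main obstacle to be the sensing CLI term: because the sensing beamformer $\mathbf{w}_{mt}^{\rm s}=\sqrt{1/N}\,\mathbf{q}_{{\rm A}_m,{\rm T}_t}$ is itself a random steering vector correlated with the target-reflected component of the aggregated channel, I must carefully separate which random quantities are independent of the residual error and correctly assemble the fourth-order moment $\bm{\psi}_{mt}=\mathbb{E}[\mathbf{w}_{mt}^{\rm s}(\mathbf{w}_{mt}^{\rm s})^{\rm H}]$-type object so that the cross terms across distinct transmit APs $m$ vanish and only the structured trace survives. Once all five pieces are assembled and divided into the numerator, collecting the common $\varepsilon_u^2$ factor and simplifying via the definitions of $\hat{\mathbf{R}}$, $\bm{\theta}$, and $\bm{\psi}$ yields (\ref{form:sinr_ul_res}); the detailed algebra I would relegate to the appendix.
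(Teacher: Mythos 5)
Your proposal follows essentially the same route as the paper's proof: term-by-term expectation of the SINR numerator and denominator, with the desired-signal power reduced to the variance-plus-mean-squared form (the paper packages this via the Gamma-distribution lemmas for $\hat{\mathbf{h}}^{\rm H}\hat{\mathbf{h}}$, which is the same moment computation you describe), the inter-user/error/noise terms collapsing to traces via MMSE orthogonality, and the two residual CLI terms handled by factoring out the independent inter-AP estimation error $\mathbf{e}_{{\rm A},mn}$ exactly as the paper does. You also correctly identify the sensing-CLI term with the random steering vector as the delicate piece, which the paper resolves through the same $\bm{\psi}_{mt}$ covariance construction carried over from the downlink proof.
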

	\begin{proof}
		Please refer to Appendix \ref {theorem_3}.
	\end{proof}
	\subsection{Sensing Localization Estimation Rate }
	  To standardize the performance evaluation of the ISAC system, we adopt the LER, which is analogous to the communication rate, to evaluate the sensing performance of the system. Assuming that the sensing targets are spatially separated, the LER of the system can be expressed as the sum of the LER of each target \cite{Guo2021Performance}. Specifically, the LER of the target $t$ can be obtained by solving the following
	\begin{align}
		\label{form:sense_r}
		{\begin{aligned}R^{\rm est}_{t}=\bigl(1-\frac{\tau_{\rm dp}+\tau_{\rm up}}{\tau}\bigr)\mathbb{E}\bigl[{\log _2}(1 + \frac{{\sigma _{loc}^2}}{{\rm CRLB}_{{\rm loc},t}})\bigr],
		\end{aligned}}
	\end{align}
	where $\sigma _{loc}^2$ is the uncertainty of the target location and ${\rm CRLB}_{{\rm loc},t}$ is the CRLB for location estimation. Typically, the estimation of location is decomposed into the estimation of the angle of the direction of arrival $\theta_{n,mt}$, the angle of the direction of departure ${\phi _{m,tn}}$, and the distance $d$, respectively. Thus, the CRLB is also decomposed into
	\begin{align}
		\label{form:crlb}
		{\begin{aligned}{\rm CRLB}_{{\rm loc},t}=\sum\limits_{m \in \mathcal{M}_{\rm dl}}\sum\limits_{n \in \mathcal{M}_{\rm ul}}\frac{\sigma^{2}_{d_{m,n,t}}+\sigma^{2}_{\theta_{n,mt}}+\sigma^{2}_{\phi_{m,tn}}}{M_{\rm dl}M_{\rm ul}},
		\end{aligned}}
	\end{align}
	where $\sigma^{2}_{d_{m,n,t}}$ denotes the distance CRLB of target $t$ sensing based on $\mathbf{y}_{m,n,t}$ in Eq. (\ref{form:sense_h_nlos_single}),  $\sigma^{2}_{\theta_{n,mt}}$ denotes the CRLB of DOA between DL AP $m$ and target $t$ sensing based on $\mathbf{y}_{m,n,t}$ , $\sigma^{2}_{\phi_{m,tn}}$ denotes the CRLB of DOD between target $t$ and UL AP $n$ sensing based on $\mathbf{y}_{m,n,t}$. The CRLB closed-form expression for each estimated parameter is given by the following theorem.
	\begin{theorem}
		\label{the:LER}
		In the case of large antenna arrays, the CRLB closed-form expression for each estimated parameter in the NAFD-based ISAC system is given by
		\begin{align}
			&\sigma^{2}_{d_{m,n,t}}= \frac{\sigma^2_{{\rm dl},n}+\sigma^2_{{\rm ul},n}+\sigma^2_{\rm s}}{{\alpha}_tp_{{\rm s},t}{\pi ^2}{{({\Delta f}/{c})^2}}\eta _{m,n,t}^2{N^2}{{{\bigl\| {\mathbf{w}_{m,t}^{\rm s}} \bigr\|}^2}}}
			\\
			&\sigma^{2}_{\theta_{n,mt}}= \frac{{{\lambda ^2}(\sigma^2_{{\rm dl},n}+\sigma^2_{{\rm ul},n}+\sigma^2_{\rm s})}}{{{4{\alpha}_tp_{{\rm s},t}{\pi ^2}\eta _{m,n,t}^2(N{B_{m,t}} - A_{m,t}^2){{{\bigl\| {\mathbf{w}_{m,t}^{\rm s}} \bigr\|}^2}}} }},\\ 	
			&\sigma^{2}_{\phi_{m,tn}}= \frac{{{\lambda ^2}(\sigma^2_{{\rm dl},n}+\sigma^2_{{\rm ul},n}+\sigma^2_{\rm s})}}{{{4{\alpha}_tp_{{\rm s},t}{\pi ^2}\eta _{m,n,t}^2(N{B_{t,n}} - A_{t,n}^2){{{\bigl\| {\mathbf{w}_{m,t}^{\rm s}} \bigr\|}^2}}} }},
		\end{align}
		where ${A_{m,t}} = \sum\nolimits_{i = 1}^N {\bigl[{y_{mi}}\cos (\theta _{mt})- {x_{mi}}\sin (\theta _{mt})\bigr]}$, ${B_m} = {\sum\nolimits_{i = 1}^N {\bigl[{y_{mi}}\cos ({\theta _{mt}})- {x_{mi}}\sin (\theta _{mt})\bigr]} ^2}$, ${A_{m,t}}$ and ${B_{t,n}}$ are related to the antenna array on the transmitter, ${A_{t,n}} = \sum\nolimits_{i = 1}^N {\bigl({y_{ni}}\cos ({\phi _{tn}})- {x_{ni}}\sin ({\phi _{tn}})\bigr)}$, ${B_{t,n}} = {\sum\nolimits_{i = 1}^N {\bigl({y_{ni}}\cos ({\phi _{tn}})- {x_{ni}}\sin ({\phi _{tn}})\bigr)} ^2}$, ${A_{t,n}}$ and ${B_{t,n}}$ are related to the antenna array on the transmitter, $\sigma^2_{{\rm dl},n}=\sum\nolimits_{i \in {\mathcal{M} _{\rm }}}^{}\sum\nolimits_{j \in {\mathcal{K} _{\rm dl}}}^{}{{{p_{{\rm dl},j}}} }\varepsilon _{{\rm dl},j}^{2}{\rm tr}\bigl(\bm {\theta}_{mn}^{\rm A}\hat {\mathbf{R}}^{\rm dl}_{ij}\bigr)$ denotes the DL communication signal interference residual power at the UL AP $n$, $\sigma^2_{{\rm ul},n}=\sum\nolimits_{k \in {\mathcal{K} _{\rm ul}}} {{p_{{\rm ul},k}}}{\rm tr}\bigl(\hat{\mathbf{R}}^{\rm ul}_{nk}\bigr)$ denotes the UL communication signal interference residual term power at the UL AP $n$.
	\end{theorem}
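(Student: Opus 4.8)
The plan is to treat each single-link observation $\mathbf{y}_{m,n,t}$ in Eq.~(\ref{form:sense_h_nlos_single}) as a complex Gaussian vector whose mean carries the three unknown geometric parameters and whose covariance is parameter-independent, and then to apply the Slepian--Bangs (deterministic-signal) form of the Fisher information. First I would fix the effective noise level. By (\ref{form:noise}) the residual-plus-noise vector $\mathbf{z}$ is modelled as $\mathcal{CN}(0,\cdot\,\mathbf{I})$, and taking the expectation of the interference-cancellation residuals $\bar{\mathbf{y}}_{{\rm dl},n}$ and $\bar{\mathbf{y}}_{{\rm ul},n}$ with the same moment computations used in the proofs of Theorems~\ref{the:dl_rate} and~\ref{the:ul_rate} identifies the scalar residual powers as $\sigma^2_{{\rm dl},n}$ and $\sigma^2_{{\rm ul},n}$. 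Hence the model is $\mathbf{y}_{m,n,t}\sim\mathcal{CN}\bigl(\boldsymbol{\mu}(\boldsymbol{\xi}),(\sigma^2_{{\rm dl},n}+\sigma^2_{{\rm ul},n}+\sigma^2_{\rm s})\mathbf{I}\bigr)$ with $\boldsymbol{\xi}=[\,d_{m,n,t},\,\theta_{n,mt},\,\phi_{m,tn}\,]^{\rm T}$; this fixes the common factor $\sigma^2_{{\rm dl},n}+\sigma^2_{{\rm ul},n}+\sigma^2_{\rm s}$ that sits in every numerator.

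Next I would write $\boldsymbol{\mu}(\boldsymbol{\xi})$ explicitly. The sensing power $p_{{\rm s},t}$, the reflection coefficient $\alpha_t$, and the large-scale gains collect into the amplitude/SNR factor, leaving the bistatic range $d$ to enter only through the propagation-delay phase $e^{-j2\pi(\Delta f/c)d}$, the angle $\theta_{n,mt}$ through the transmit steering vector $\bar{\mathbf{q}}_{{\rm A}_m,{\rm T}_t}(\theta)$, and the angle $\phi_{m,tn}$ through the receive steering vector $\bar{\mathbf{q}}_{{\rm T}_t,{\rm A}_n}(\phi)$, whose per-antenna phases take the form $2\pi\mathbf{k}^{\rm T}\mathbf{p}/\lambda$ exactly as in (\ref{form3:qat}). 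Because the covariance does not depend on $\boldsymbol{\xi}$, the Slepian--Bangs formula collapses to $[\mathbf{F}]_{ab}=\frac{2}{\sigma^2_{{\rm dl},n}+\sigma^2_{{\rm ul},n}+\sigma^2_{\rm s}}\,\mathrm{Re}\bigl\{(\partial\boldsymbol{\mu}/\partial\xi_a)^{\rm H}(\partial\boldsymbol{\mu}/\partial\xi_b)\bigr\}$, so the entire derivation reduces to three partial derivatives of $\boldsymbol{\mu}$.

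I would then evaluate those derivatives. Differentiating the delay phase pulls down $j2\pi(\Delta f/c)$; after coherent combining across the array this yields the factor $\pi^2(\Delta f/c)^2 N^2\|\mathbf{w}^{\rm s}_{m,t}\|^2$ in the distance entry. Differentiating each steering-vector phase with respect to $\theta_{mt}$ produces the per-antenna weight $\tfrac{2\pi}{\lambda}\bigl(y_{mi}\cos\theta_{mt}-x_{mi}\sin\theta_{mt}\bigr)$, which is precisely the summand defining $A_{m,t}$ and $B_{m,t}$, and the $\phi_{tn}$ derivative produces the analogous weights that define $A_{t,n}$ and $B_{t,n}$. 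The diagonal angular Fisher entries are thus proportional to the empirical second moment of these weights; since the unknown delay/common phase acts as a nuisance aligned with the all-ones direction, projecting it out replaces the raw $B$ by the centered quantity $NB_{m,t}-A_{m,t}^2$ (respectively $NB_{t,n}-A_{t,n}^2$), and the $\lambda^2/(4\pi^2)$ prefactor is the square of the $2\pi/\lambda$ phase slope. Collecting constants reproduces the three stated numerators.

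The hard part will be the final inversion of the $3\times3$ Fisher matrix, which in general couples distance to each angle (through the uncentred sums $A_{m,t}$, $A_{t,n}$) and couples the two angles to each other. This is exactly where the \emph{large antenna array} hypothesis enters: the diagonal angular entries scale like $N^2$ times the (nonzero) spatial variance of the array positions, whereas the off-diagonal coupling terms scale only like $N$, so the Fisher matrix is asymptotically block-diagonal and each CRLB reduces to the reciprocal of its own diagonal entry. I would also have to verify that the unknown reflection amplitude $\eta_{m,n,t}$ enters only as a multiplicative nuisance that rescales the SNR without degrading the leading-order angle and delay information, and finally that averaging the per-link CRLBs over the active $(m,n)$ pairs and normalising by $M_{\rm dl}M_{\rm ul}$ recovers the aggregate ${\rm CRLB}_{{\rm loc},t}$ of~(\ref{form:crlb}).
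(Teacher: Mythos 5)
Your proposal follows essentially the same route as the paper's Appendix E: model $\mathbf{y}_{m,n,t}$ as a Gaussian vector with parameter-dependent mean and parameter-independent covariance, identify the residual powers $\sigma^2_{{\rm dl},n}$, $\sigma^2_{{\rm ul},n}$ via the same moment computations as in the rate proofs, form the FIM from the derivatives of the mean with respect to $d_{m,n,t}$, $\theta_{mt}$, $\phi_{tn}$, and invoke the large-array asymptotic decoupling before inverting. Your account of where the factor $N B - A^2$ comes from (a Schur complement against the common delay-phase nuisance rather than a purely diagonal FIM) is in fact more explicit than the paper's terse "block diagonal" justification, but it is the same argument.
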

	\begin{proof}
		Please refer to Appendix \ref {theorem_4}.
	\end{proof}
	\section{AP Duplex Mode Optimization}
	\subsection{Problem Formulation}
	\begin{figure}
		\centering
		\subfloat {\includegraphics [ scale=0.3]{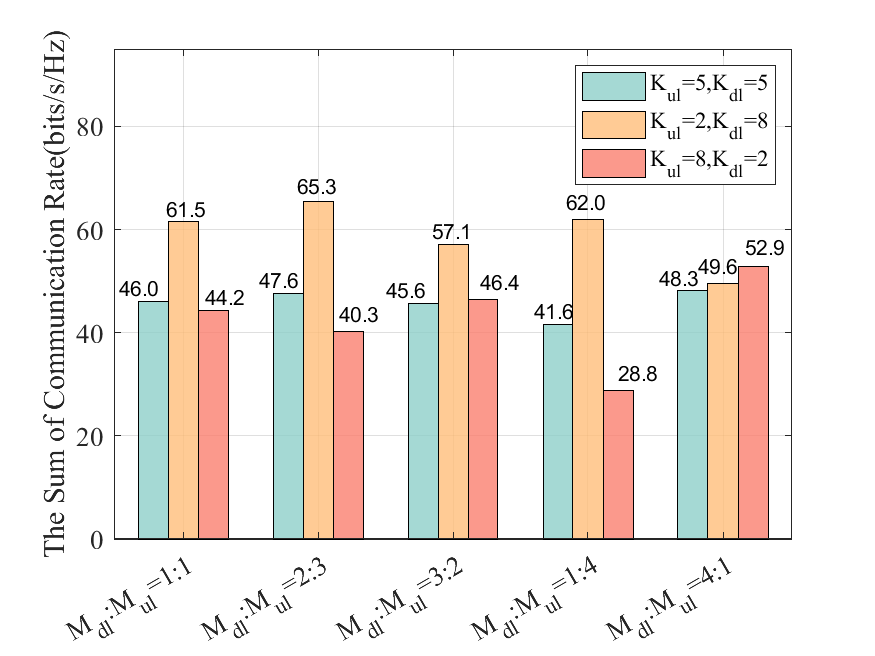}}
		\hspace{-2mm}
		\subfloat {\includegraphics [ scale=0.3]{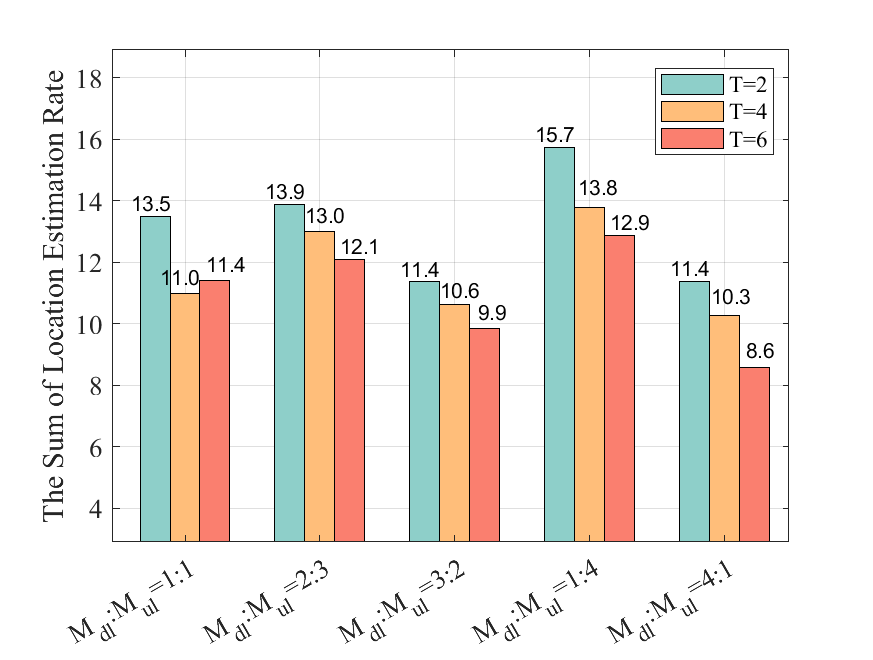}}
		\caption {Changes of system communication and sensing performance in different scenarios ($M=10$, $K=10$). (a) Changes in system communication performance. (b) Changes in system sensing performance.}
		\label{fig:rate_com}
	\end{figure}
	Fig.\ref{fig:rate_com} shows the simulation results for communication and sensing performance with a fixed total number of APs and varying numbers of sensing UEs and targets, as well as different UL and DL ratios of the APs. The results demonstrate that the duplex mode selection of APs has a significant impact on both the communication and sensing performance of the system.
	Based on the communication performance expression derived, it is evident that the communication rates of both DL and UL are impacted by the CLI between the AP and the UEs. To mitigate the CLI, the CPU can operate APs with different duplex modes by spatially isolating them. However, it should be noted that the sensing performance is also linked to the distance from the DL APs to the UL APs, as revealed by the derived CRLB expression in Theorem \ref{theorem_4}. The use of spatially isolated duplex modes may lead to lower sensing signal energy received by the UL APs, which in turn affects the LER. Thus, it is imperative to strike a balance between communication performance and sensing performance. 
	
	In this section, we use the theory of multi-objective optimization to propose the following MOOP. The first optimization objective is to maximize the sum communication rate, denoted as
	\begin{align} 
		\mathcal{P}_1:~\underset{\mathbf{x}_{\rm u},\mathbf{x}_{\rm d}}{\mathop{\max }}&~{f_1}=\sum\limits_{l \in {\mathcal{K} _{\rm dl}}}{R_{l}^{{\rm dl}} + } \sum\limits_{u \in {\mathcal{K} _{\rm ul}}} {R_{u}^{{\rm ul}}},\\
		\text{s.t.\;\;}&\mathbf{x}_{\rm u}+\mathbf{x}_{\rm d}=\mathbf{I}_M.\label{form:mode_select}
	\end{align}
	Each AP can be either UL or DL at any time, so the duplex mode selection vector satisfies the constraint in Eq. (\ref {form:mode_select}).
	
	Mathematically, the optimization problem for maximizing LER can be expressed as
	\begin{align}
		\mathcal{P}_2:~\underset{\mathbf{x}_{\rm u},\mathbf{x}_{\rm d}}{\mathop{\max }}&~{f_2} = \sum\limits_{t \in {\mathcal{T}}}{R_{{\rm loc},t}^{\rm est}},\\
		\text{s.t.\;\;}&(\ref{form:mode_select}).
	\end{align}
	Building on the previous analysis, given the conflict between the two objectives, we propose a MOOP to investigate the trade-off between them. This optimization problem can be mathematically formulated as
	\begin{align}
		\mathcal{P}_3:~\underset{\mathbf{x}_{\rm u},\mathbf{x}_{\rm d}}{\mathop{\max }}&~\mathbf{f} = {\bigl[ {{f_1},{f_2}} \bigr]^T},\\
		\text{s.t.\;\;}&(\ref{form:mode_select}).
	\end{align}
	The objective of $\mathcal{P}_3$ is to optimize the combined performance of communication and sensing. To tackle this MOOP, the concept of Pareto optimality can be utilized. An allocation policy is considered Pareto optimal if no other policy exists that can enhance one objective without compromising others. Therefore, a MOOP may have multiple Pareto optimal solutions. While these solutions are selected based on their dominance relationships, a closer examination of these optimal solutions reveals that they correspond to distinct multi-objective weights. By adjusting these weights, it is feasible to regulate the tradeoff between different objectives and investigate various regions of the Pareto boundary. We aim to develop an algorithm that can efficiently identify a Pareto-optimal solution for $\mathcal{P}_3$.
	\subsection{DQN-based Solution}
	The optimization problem discussed above is classified as NP-hard, indicating its high computational complexity.  To address this challenge, Reinforcement Learning (RL) has emerged as a promising approach that involves an intelligent agent perceiving the state of the environment and making decisions based on feedback to maximize rewards.  The DQN algorithm is a type of RL that combines deep learning and Q-learning\cite{Arulkumaran2017Deep}. DQN utilizes a neural network to learn Q-value functions, enabling it to handle complex environments. 
	Subsequently, we shall present the implementation of the DQN algorithm in the context of ADMO within the NAFD-based ISAC system. This approach involves four key components: 
	\begin{enumerate}[]
		\item \textit{Agent:} A centralized agent was established on the CPU side to facilitate the selection of duplex mode for each AP by monitoring the current state of the environment.
		\item \textit{State:} The current state of the agent includes a set of $M$ APs, denoted as $\bigl\{ {{x_1},{x_2},...,{x_M}} \bigr\}$. The state of each AP is represented in binary format, where $x_m$ of 0(1) indicates that the AP $m$ is operating in the UL(DL) mode.
		\item \textit{Action:} As there are only two operating modes for each AP, the system can take actions to change the original operating mode, either from UL to DL or from DL to UL. Consequently, these actions can be represented as a Markov action space denoted as $A= \bigl\{a_1, a_2,..., a_M\bigr\}$ where $a_t$ represents the action of switching the operating mode of AP $m$ from its original mode to the other mode.
		\item \textit{Reward:} Define the reward as
		\begin{align}
			\label{reward}
			{\begin{aligned} 
					{r_t} \buildrel \Delta \over = {\omega _c}{f_1} + {\omega _s}{f_2},
			\end{aligned}}
		\end{align}
		where ${\omega _c}$ and ${\omega _s}$ denote the weights of the two targets respectively. The working mode of each AP can be switched between UL mode and DL mode by the policy adopted by the agent.
	\end{enumerate}
	The DQN algorithm enhances its efficiency and applicability by adopting a value function-based approach instead of directly storing Q-values. To improve the training stability, the Double DQN structure is utilized in this section, which consists of two deep neural networks(DNN): A Q-evaluate network and a Q-target network. The Q-evaluate network is responsible for selecting actions, while the Q-target network computes the Q-value of the target. To minimize the correlation between the estimate of the Q-value of the target and the actual value, and to mitigate the over-estimation caused by the correlation, the parameters of the Q-target network are periodically copied from the Q-evaluate network. The Q-value update strategy for DQN is
	\begin{align}
		\label{form:Qvalue_DQN}
		{\begin{aligned} 
				Q({s_t},{a_t};\theta) ={r_t} + \gamma *\mathop {\max }\limits_{a_{t+1}}~Q({s_{t+1}},{a_{t+1}};\theta),
		\end{aligned}}
	\end{align}
	where $Q ({s_t},{a_t};\theta)$ is the Q value corresponding to taking action $a_t$ in state $s_t$ when the Q-evaluate network parameter is $\theta$. The DQN algorithm framework is shown in Fig.\ref {fig:DQN}.
	\begin{figure}[htb]
		\centering
		\includegraphics[scale=0.7]{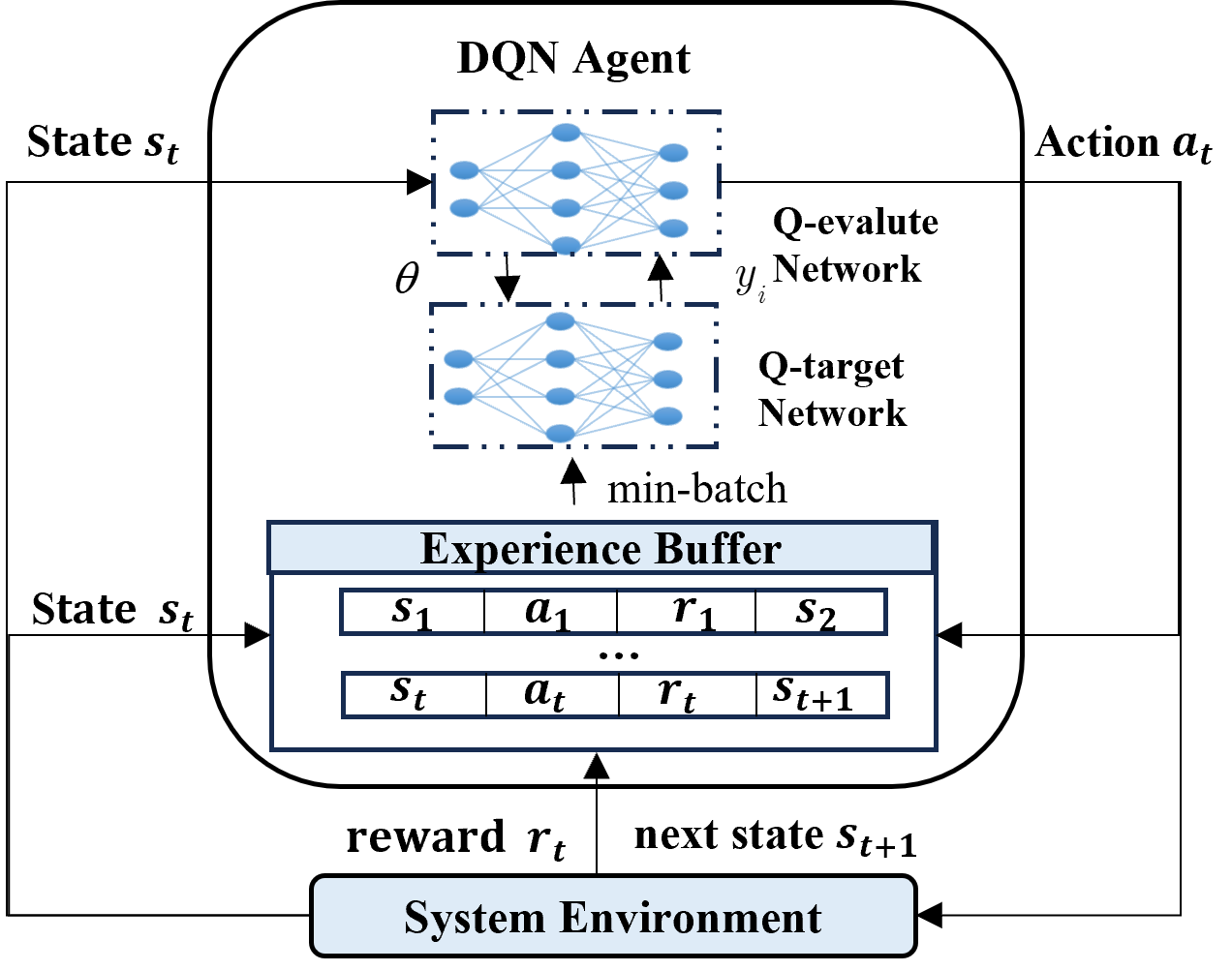}
		\caption {Diagram of the DQN algorithm}
		\label{fig:DQN}
	\end{figure}
	
	First, the agent observes the current state $s_t$ and inputs $s_t$ into the Q-evaluate network. The Q-evaluate network outputs $Q ({s_t},{a_m};\theta)$ for all possible actions $a_m$ in $s_t$. The action with the highest Q value, $a_t=\max~Q ({s_t},{a_m};\theta)$ is selected as the next action. After obtaining the action $a_t$, the reward $r_t$ can be calculated using Eq. (\ref{reward}). During training, the current action $a_t$, the state $s_t$, the reward $r_t$, and the next state $s_{t+1}$ are stored as a set of experiences $e=\bigl \{a_t,s_t,r_t,s_{t+1}\bigr\}$ in the experience replay buffer $\mathcal {D}_{\rm replay}$. After a sufficient number of experiences have been stored, a set of experience values $\mathcal {D}_{\rm min}$ is randomly sampled from $\mathcal {D}_{\rm replay}$ for updating the parameters $\theta$ in the Q-evaluate network. The DQN algorithm adopts experience replay to prevent sample correlation and overfitting issues caused by selecting a specific set of experiences. The network parameters are updated using the gradient descent method
	\begin{align}
		\label{form:theta}
		{\begin{aligned} 
				\theta  = \theta  - \alpha *{\nabla _\theta }L(\theta),
		\end{aligned}}
	\end{align}
	where $\alpha$ is the learning rate and $L (\theta)$is the loss function, which is calculated by
	\begin{align}
		\label{form:Loss}
		{\begin{aligned} 
				L(\theta) = \mathbb{E}_{e_i \sim \mathcal{D}}\bigl[({{y_i} - Q({s_i},{a_i};\theta )})^2\bigr],
		\end{aligned}}
	\end{align}
	where ${y_i} = {r_i} + \gamma *\mathop {\max }\limits_{a_{t+1}} Q(s_{i+1},a_{i+1};\theta_{\rm target})$
	, $\theta_{\rm target}$ refers to the parameters of the Q-target network. When selecting actions, DQN typically employs a $\varepsilon$-greedy strategy, which selects a random action with a certain probability $\varepsilon$ to explore the environment, and selects the action with the highest predicted Q value with a probability of 1-$\varepsilon$ to exploit existing information. In the context of the DQN algorithm, two important concepts are introduced, namely the episode and the time step $t_{\rm max}$. An episode is a sequence of actions and state transitions that start from the initial state of the environment and continue until the intelligent agent reaches the termination state. The behaviors of the agent between Episodes are considered independent, and the agent can learn and improve its behaviors through each episode. $t_{\rm max}$ is typically used to represent the number of time steps accumulated before a gradient update is performed. It is used to control the length of the experience sequence stored and sampled in the experience replay, as well as to determine when to update the Q-target network. The pseudocode of the algorithm can be found in Algorithm \ref{algo:DQN}.
	\begin{algorithm}
		\caption {DQN-based ADMO algorithm}
		\label{algo:DQN}
		\begin{algorithmic}[1]
			\State \textbf {Initialization:} channel and precoding vectors $\mathbf {h}_{mk},\mathbf {h}_{{\rm I},u,l},\mathbf {h}^{\rm AP}_{{\rm A},mn}$, network parameters $\theta$ and $\theta_{ target}$.
			\While {$episode<E_{\rm max}$}
			\State Initialization environment and get state $s_0$.
			\While {$t < t_{\rm max}$}
			\State $\varepsilon$-greedy algorithm to select action $a_t$.
			\State stores $\bigl\{{{s_t},{a_t},{r_t},{s_{t+1}}} \bigr\}$ into $D_{\rm replay}$.
			\State Sample a min-batch $D_{\rm min}$ from $D_{\rm replay}$.
			\For {each experience sample $\bigl\{ {{s_i},{a_i},{r_i},{s_{i+1}}} \bigr\}$}
			\State Calculate Q-value by Eq. (\ref{form:Qvalue_DQN}).
			\State Update $\theta$ by Eq. (\ref{form:Loss}).
			\EndFor
			\If{$t \bmod t_{update}==0$}
			\State Update $\varepsilon$, $\theta_{\rm target}$.
			\EndIf
			\EndWhile
			\EndWhile
			\State \textbf {Output:} The action corresponds to the optimal state as well as the reward.
		\end{algorithmic}
	\end{algorithm}
	\subsection{Algorithm Complexity Analysis}
	The exhaustion method exhibits an exponential growth in computational complexity. In contrast, the computational complexity of DQN mainly arises from matrix multiplication in the forward and backward propagation of the deep neural network and the number of training iterations. It can be approximated as $O (2 E_{\rm max}{t_{\max }}{D_{\rm min}}(M {n_0} + \sum\nolimits^{m-2}_{i=1} n_{i} n_{i+1}+ 2M {n_m}))$, where $M$ is the number of AP, the input layer of the neural network is $M$, the output layer is $2M$, $D_{\rm min}$ represents the size of the min-batch, $t_{\max }$ represents the number of iterations, and $m$ represents the number of hidden layers in the neural network, with $n_{i}$ denoting the number of neurons in the corresponding hidden layer. Notably, the computational complexity of the DQN is linearly related to $M$, resulting in lower complexity and better scalability.
	\section{Simulation Result}
	In this section, we present a comparative analysis of the proposed NAFD-based ISAC system with the existing ISAC system design. Furthermore, the accuracy of the closed-form expression derived in the previous section is verified via Monte Carlo simulation. Finally, based on the closed-form expression results, a comparative evaluation of the proposed DQN-based algorithm with four commonly used algorithms in ADMO is presented.
	
	We consider a scenario, where all UEs and targets are assumed to be randomly distributed in a 300 m$\times$300 m square area with a total of $M=8$ APs, and consider each AP configured with $N=20$ antennas. There are 8 active UEs in the system, $K_{\rm dl}=K_{\rm ul}=4$. Consider the channel to be quasi-static and the path loss exponent of the channel $\alpha=3.7$\cite{Lv2018Uplink}. All UL UEs send the same data power i.e. ${\bigl\{ {{p_{{\rm ul},i}}} \bigr\}_{\forall i \in {\kappa _{\rm ul}}}} = {p_{\rm ul}}=0.1~{\rm W}$. For DL APs, the power is allocated equally based on the number of UEs and the number of targets to be sensed i.e.${\bigl\{ {{p_{\rm dl,i}}} \bigr\}_{\forall i \in {\mathcal {K}_{\rm dl}}}} = {\bigl\{ {{p_{{\rm s},t}}} \bigr\}_{\forall t \in \mathcal {T}}}=0.5~{\rm W}$. The reflection coefficient is set as $\{\alpha_t\}_{\forall t \in {\mathcal {T}_{}}}=0.8$\cite{mao2023communicationsensing}. The signal bandwidth is set as $\Delta f=10$ MHz and the coherent block length is set as $\tau=100$\cite{Van2020Massive}. The AWGN power for DL and UL communication is set to $\sigma _{\rm dl}^2 = \sigma _{\rm ul}^2 = \sigma^2_{\rm s}=-113$ dB\cite{Li2021Network}.
	
	\subsection{Results of System Performance Comparison}
	In this part, we will compare the performance of the NAFD-based ISAC system proposed in this paper with two existing ISAC systems. The specific system design is as follows
	\begin{enumerate}[]
		\item \textit{Time-division duplex (TDD)-based ISAC:} In a TDD-based ISAC system, the coherent time block is divided into three separate modes: sensing, UL communication, and DL communication, which are carried out using the time-division duplex mode\cite {Zhang2022Time}. Symbol lengths for each mode are denoted as ${\tau _s}$, ${\tau _{\rm ul}}$ and ${\tau _{\rm dl}}$, respectively.
		\item \textit{Co-frequency co-time full duplex (CCFD)-based ISAC:} CCFD-based ISAC system employs a single base station to achieve both sensing and simultaneous UL and DL communication\cite {Guo2021Performance}. 
	\end{enumerate}
	
	To ensure a fair comparison, the AP deployment location for TDD-based ISAC is consistent with NAFD-based ISAC and is uniformly distributed on a circular area with a radius of 200 m. Specifically, half of the APs are set to UL mode and the other half are set to DL mode. Conversely, for CCFD-based ISAC, a single base station is deployed at the center of the circular. The self-interference cancellation technique is employed in the CCFD-based ISAC system to achieve a self-interference power $\sigma _{\rm self}^2 =  - 85$ dB\cite{Ma2023Wideband}.
	
	Fig. \ref{fig:cmp_single_muti} presents a comparison and analysis of the sensing performance between single-station independent sensing (CCFD-based ISAC) and multi-static cooperative sensing when only one target is present in the sensing scene. The color scale indicates the LER, with darker colors indicating poorer performance and brighter colors indicating better performance. The results show that single-station independent sensing performs well only near the base station, while multi-static cooperative sensing exhibits better performance in regions near each AP. This highlights the advantage of decentralized AP in achieving larger sensing ranges. Therefore, multi-static cooperative sensing is expected to become a major trend in future developments.
	\begin{figure} 
		\centering
		\subfloat{\includegraphics [scale=0.3]{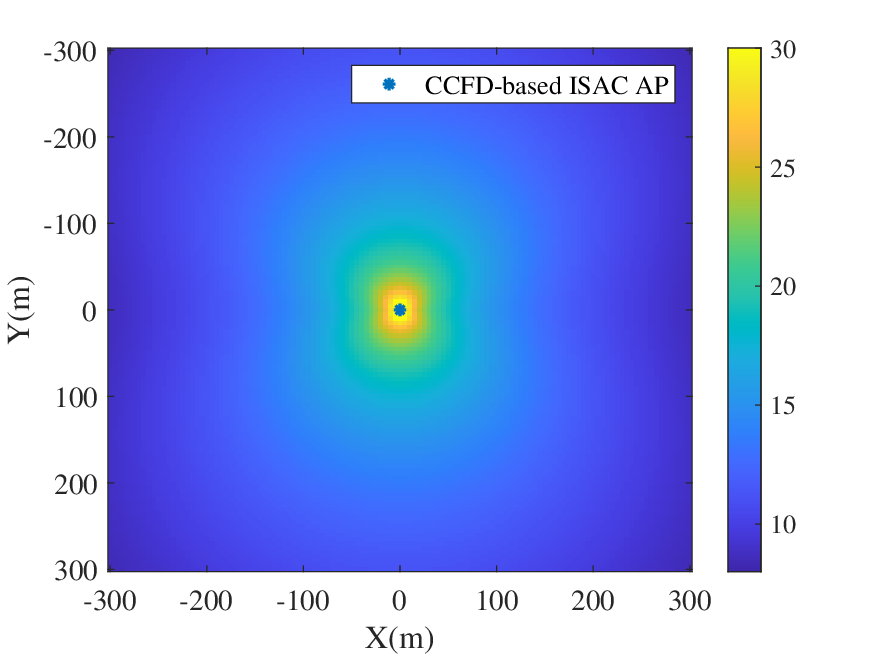}\label {fig:single_sense}}
		\hspace{-2mm}
		\subfloat{\includegraphics [scale=0.3]{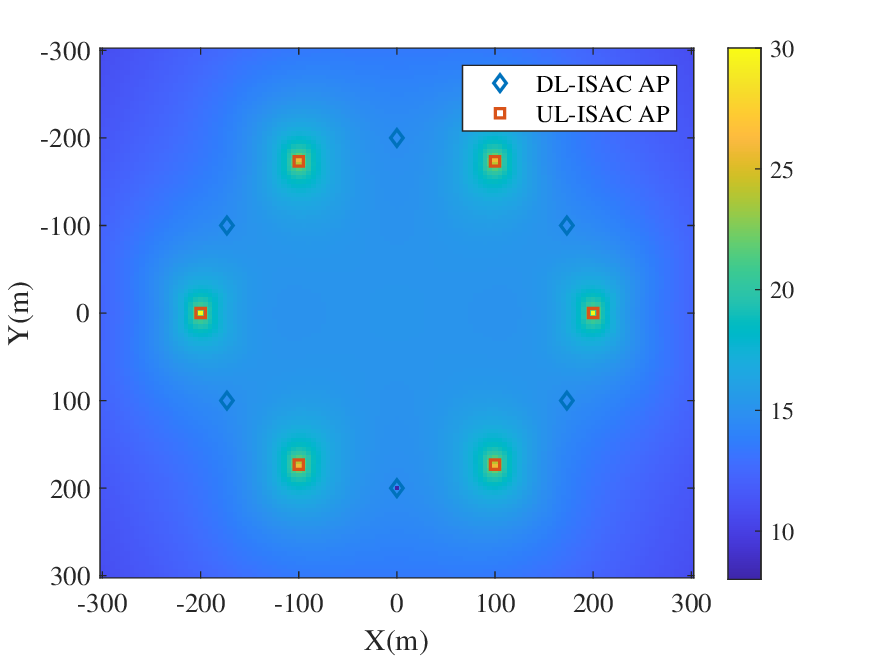}\label {fig:muti_sense}}
		\caption {Comparison of sensing performance between single-station independent sensing and multi-static cooperative sensing . (a) Sensing performance of single-station independent sensing. (b) Sensing performance of multi-static cooperative sensing.}
		\label{fig:cmp_single_muti}
	\end{figure}
	
	Fig. \ref{fig:M} presents the changes in communication and sensing performance with varying numbers of antennas for three different types of ISAC systems. The sensing symbol length of the TDD-based ISAC system is set to $\tau_{\rm s}=20$ and $\tau_{\rm s}=50$. Fig. \ref{fig:rate_M} illustrates the communication performance, where the NAFD-based ISAC system outperforms the CCFD-based and TDD-based ISAC systems. Although the communication performance of the CCFD-based ISAC system improves with the increase in the number of antennas, it still lags behind the NAFD-based ISAC system. The longer the sensing symbol length for the TDD-based ISAC system, the shorter the relative communication symbols, leading to poorer communication performance, whereas the NAFD-based ISAC system is not affected by this issue. Fig. \ref{fig:sense_M} illustrates the sensing performance, where the TDD-based ISAC system's perceptual capability depends only on the length of the sensing symbols, as its sensing is carried out individually. The overall sensing performance of all systems increases with an increase in the number of antennas, as it enhances the signal energy and improves the directivity of the sensed beam. Nevertheless, the sensing performance of the CCFD-based ISAC system is impaired by self-interference. The communication and sensing performance of the NAFD-based ISAC system are mutually interfering, resulting in a degradation of sensing performance that is proportional to the number of antennas. The simulation results demonstrate that the NAFD-based ISAC system with interference cancellation significantly improves communication performance compared to other systems and can outperform them in terms of sensing performance with the appropriate number of antennas. The sensing performance of the NAFD-based ISAC system can also be improved by using appropriate interference management strategies.
	\begin{figure} 
		\centering
		\subfloat{\includegraphics [   scale=0.28]{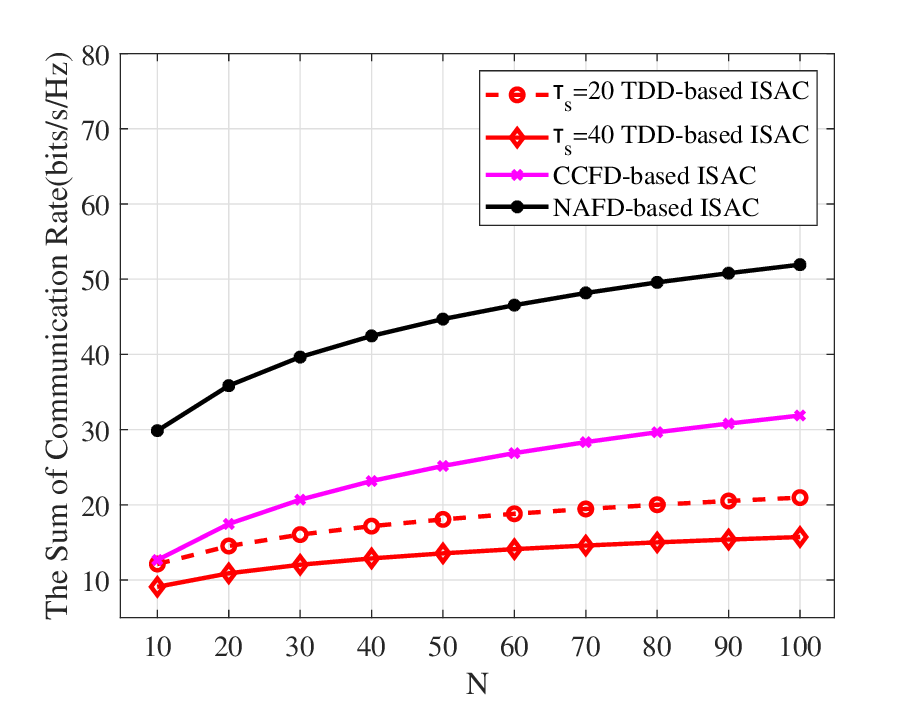} \label {fig:rate_M}}
		\hspace{-2mm}
		\subfloat{\includegraphics [   scale=0.28]{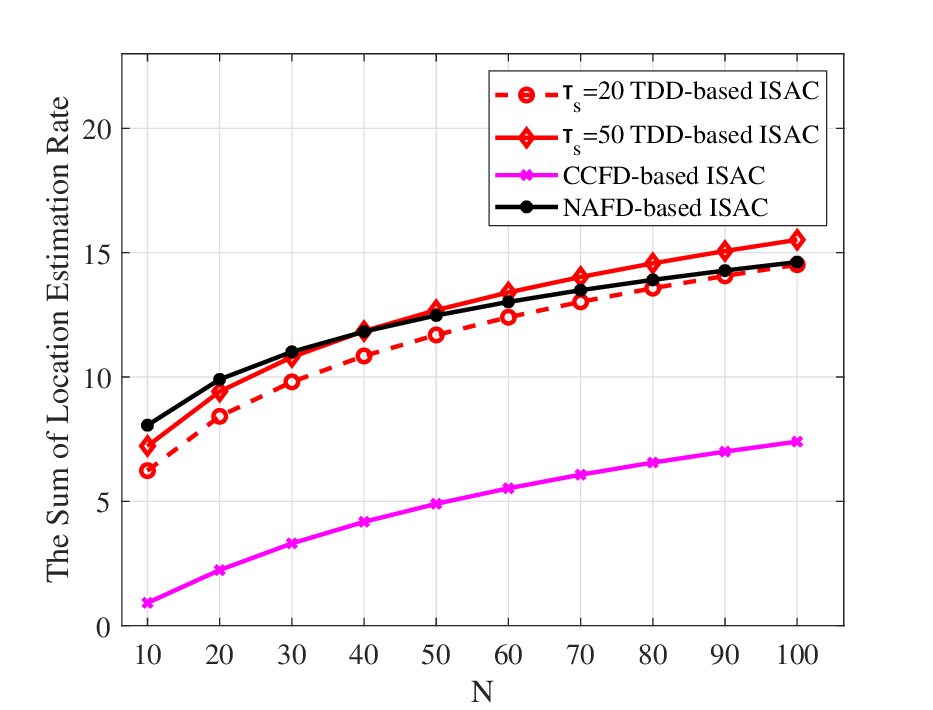}\label {fig:sense_M}}
		\caption {Comparison of system performance with the number of antennas. (a) Comparison of system communication and rate variation with antenna number. (b) Comparison of system LER with antenna number.} 
		\label {fig:M}
	\end{figure}
	
	\subsection{Result of Closed-form Expression Validation}
	In this section, we conduct a validation study of the closed-form expressions for the communication rate and LER through Monte Carlo simulations. Specifically, we analyze the variations of the UL and DL communication rates and position LER obtained from the Monte Carlo simulation with different numbers of antennas. As shown in Fig. \ref {fig:sim_the}, the results of the closed-form expressions are consistent with the Monte Carlo simulation results, and the outcomes are not affected by changes in the number of antennas. These findings confirm the accuracy of the derived closed-form expressions.
	\begin{figure}[htbp]
		\centering
		\subfloat{\includegraphics [scale=0.3]{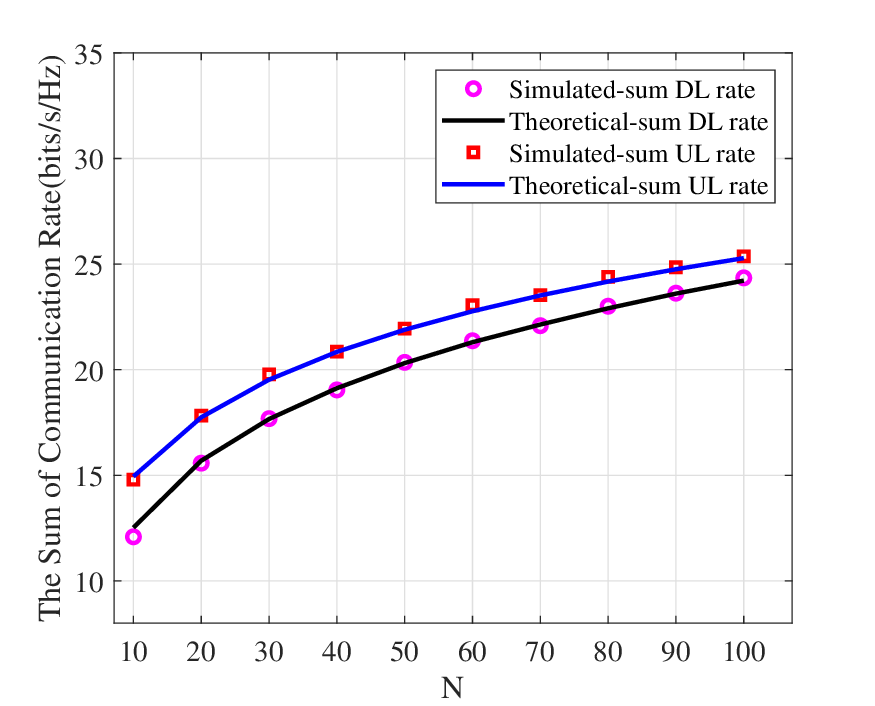}\label {fig:com_sim_the}}
		\hspace{-4mm}
		\subfloat{\includegraphics [scale=0.3]{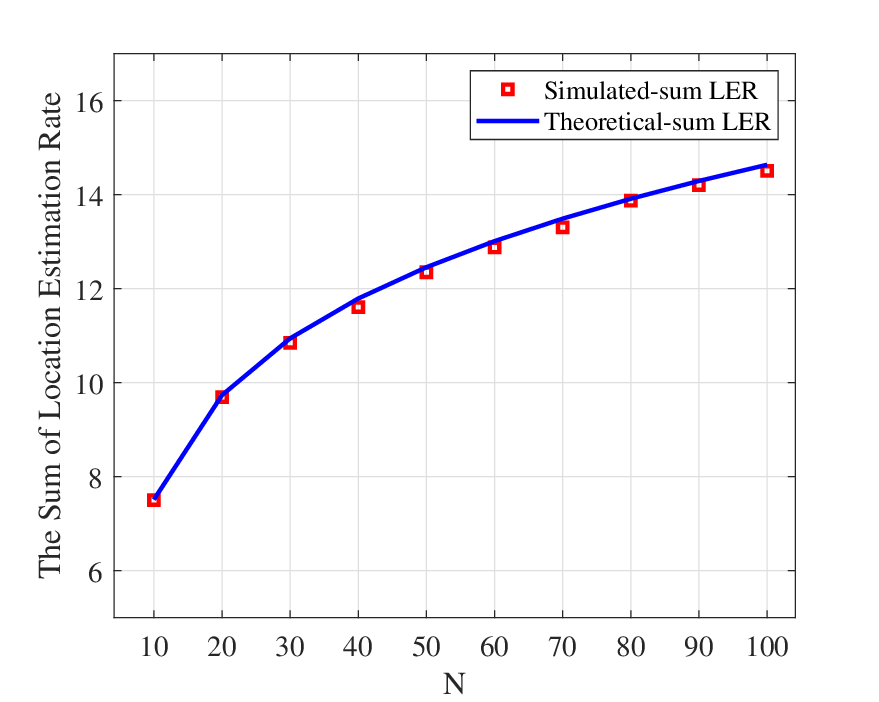}\label {fig:sense_sim_the}}
		\caption {Closed-form expression verification. (a) The closed-form and simulated values of the communication rate vary with the number of antennas. (b) The closed-form and simulated values of the LER vary with the number of antennas.}
		\label{fig:sim_the}
	\end{figure}
	\subsection{Result of duplex mode selection}
	To determine the optimal network structure for DQN, we conduct a study on the effect of different DNN hyperparameters on the convergence performance of the neural network. The reward results of every 20 episodes are averaged and recorded as a convergence curve during the training process. We compare the convergence of neural networks with four network structures using different hyperparameters. Our results show that the single hidden layer consisting of only 10 neurons, i.e. $neure=[10]$, doesn't converge as well as the other network structures. A double hidden layer with 20 neurons, i.e. $neure=[20,20]$, is found to be more effective. Therefore, we use this network structure in subsequent simulations. Furthermore, we study the impact of two key hyperparameters, namely $t_{\rm max}$ and learning rate $lr$ on the convergence performance of the neural network. $t_{\rm max}$ denotes the number of steps during each episode, and a larger one leads to faster convergence but also consumes more training time. On the other hand, the learning rate of the neural network parameters determines the magnitude of the weights along the gradient in the process of updating the network parameters. We find that a smaller learning rate could avoid instability during the training process, resulting in better convergence and avoidance of local optima. However, it may also lead to slower convergence and require more iterations to achieve the desired performance. After simulation comparisons, we select the network with the best convergence performance, i.e. a double hidden layer of 20 neurons, with $t_{max}=10$ and $lr=0.01$ for subsequent simulations.
	\begin{figure}
		\centering
		\includegraphics[scale=0.5]{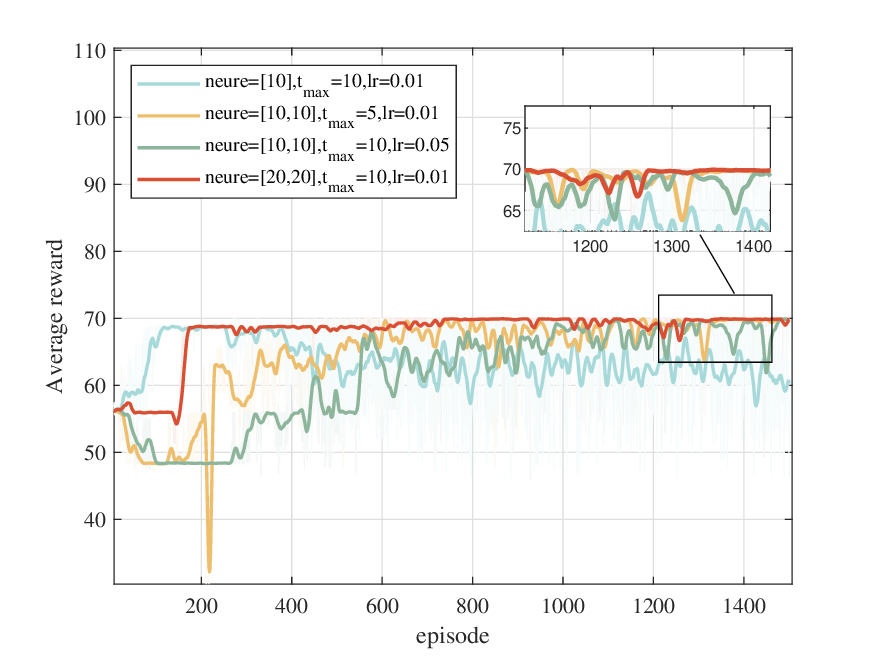}
		\caption {DQN performance under different DNN parameters.}
	\end{figure}
	
	In this section, we introduce four baseline algorithms to compare the performance of the proposed DQN-based ADMO algorithm.
	\begin{enumerate}
		\item \textit{Random Selection (RANDOM):} The RANDOM method is employed to randomly select APs for allocation, which serves as a baseline for system performance without any ADMO.
		\item \textit{Average Selection (AVG):} The AVG method is utilized to assign equal numbers of UL APs and DL APs, which demonstrates system performance when the UL and DL modes of the APs are fixed.
		\item \textit {Exhaustion (EXU):} The EXU method is employed to conduct an exhaustive search, which aims to find the optimal solution by exploring all possible combinations of AP allocation. However, the time complexity of the exhaustive method is typically exponential, which can significantly reduce its efficiency in solving large-scale problems and increase with the size of the problem.
		\item \textit{Q-learning\cite{Zhu2022Load}:} The Q-learning method is a table-based RL approach that utilizes a Q-table to store the Q-value of each state-action pair. In the Q-learning-based ADMO algorithm, the agent, states, actions, and rewards are consistent with those in DQN.
	\end{enumerate}
	
	Fig. \ref{fig:CDF} presents the simulated cumulative probability distribution (CDF) of the objective function values for the proposed algorithm and other baseline algorithms. The objective function is calculated using Eq. (\ref{reward}), while the communication and sensing performance weights in Eq. (\ref{reward}) are set to ${\omega _c}=0.5$ and ${\omega _s}=0.5$ for algorithm comparison. The performance gap between the RANDOM algorithm and the EXU is analyzed as a benchmark. The results show that, at 0.5 CDF, the algorithm with an average allocation of APs outperforms the RANDOM algorithm by about 38\%. This is because the simulation assumes an equal number of UL and DL UEs, resulting in similar communication needs. Therefore, using the average allocation of APs for UL and DL communication can better match UE needs and improve performance. However, the performance gap between the algorithm with the AVG and the EXU algorithm is still significant. Hence, there is room for performance improvement under the fixed AP duplex mode, and dynamic ADMO is necessary. At 0.5 CDF, the Q-learning-based ADMO algorithm improves performance by about 70\% compared to the RANDOM algorithm, and the gap with the EXU algorithm is about 25\%. The result of the DQN algorithm is almost identical to the result of the EXU algorithm, with a performance gap narrowed down to 4\%. Therefore, compared with other algorithms, the DQN-based ADMO algorithm demonstrates superior performance.
	\begin{figure}
		\centering
		\includegraphics[scale=0.5]{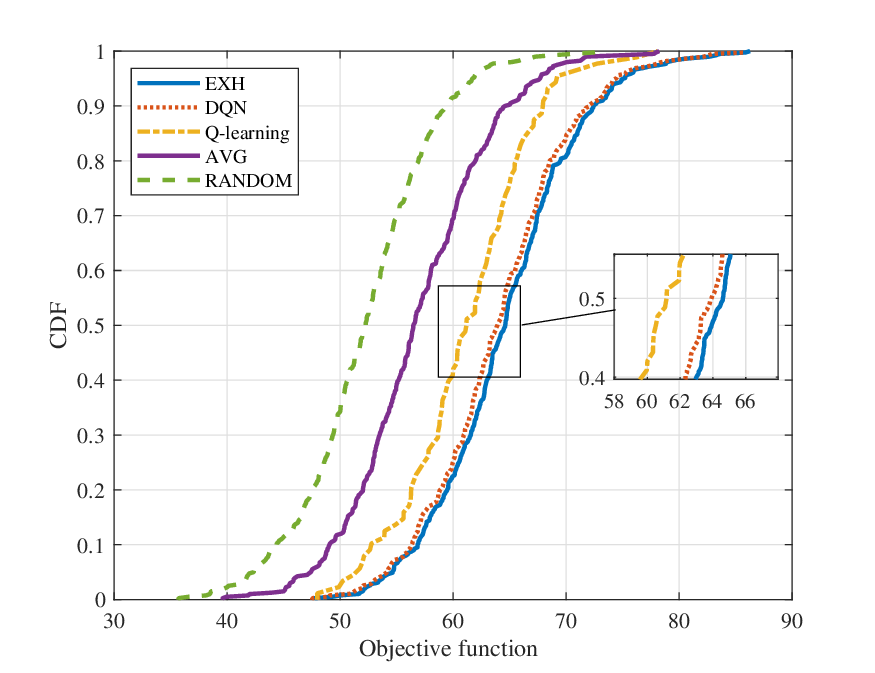}
		\caption {Performance comparison of ADMO algorithms.}
		\label{fig:CDF}
	\end{figure}
	
	Fig. \ref{fig:Pareto} demonstrates the communication and sensing performance corresponding to all the AP duplex modes obtained using the EXU method, which are represented by the blue points in the figure. According to the definition of Pareto solution, a Pareto solution cannot achieve the value of a certain objective function without compromising the values of other objective functions. Therefore, a Pareto solution can be found among the ordinary solutions, as indicated by the black dots in the figure. The Pareto solutions are connected to form a Pareto boundary. Moreover, the red dots in the figure depict the output solutions of DQN when considering different weights of the objectives in Eq. (\ref{reward}), and all the solutions of DQN are either Pareto solutions or approximate Pareto solutions.
	\begin{figure}[htb]
		\centering
		\includegraphics[scale=0.5]{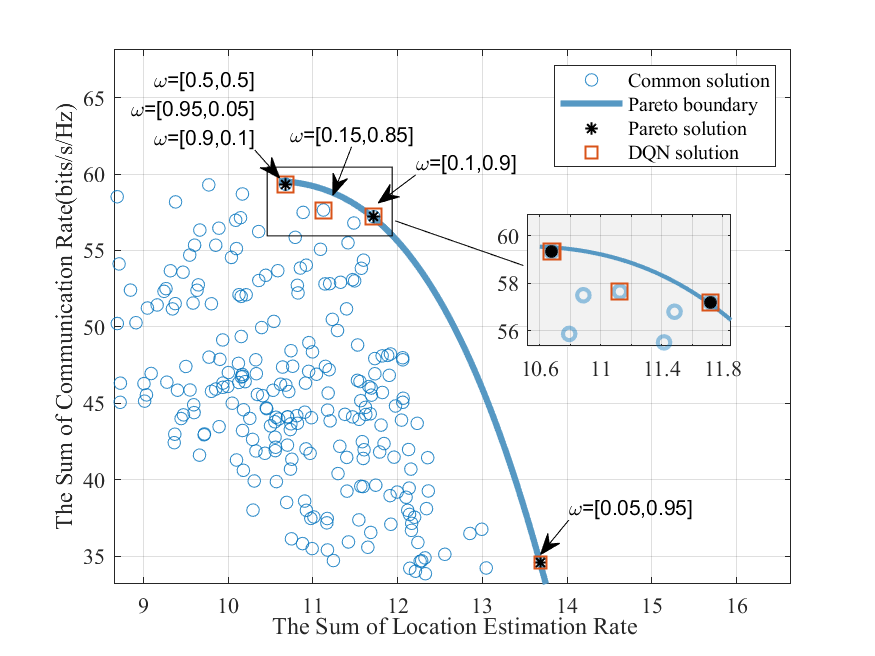}
		\caption {The trade-off between communication and sensing in ADMO.}
		\label{fig:Pareto}
	\end{figure}
	\section{Conclusion}
	In this paper, we propose a NAFD-based ISAC design that supports simultaneous UL and DL communication in muti-static ISAC system. The design employs the NAFD technique, which can partially eliminate CLI. Subsequently, we derive closed-form expressions for the communication rate and LER under imperfect CSI for evaluating the communication performance and sensing performance, respectively. In terms of optimization, we establish a MOOP for communication and sensing ADMO to obtain a trade-off between communication and sensing. Subsequently, we propose a DQN-based ADMO algorithm that sums multiple objective values with different weights as rewards to solve the MOOP. We compare the proposed design with existing TDD-based ISAC and CCFD-based ISAC as benchmark design. The simulation results show that the NAFD-based ISAC system is able to significantly outperform the other ISAC systems in terms of communication performance with essentially no loss of sensing performance. The simulation results verify that the derived closed-form expressions can fit the Monte Carlo simulated values better. Finally, the DQN-based ADMO algorithm can achieve a performance effect close to that of the EXH method with low complexity, and by adopting different target weights, the DQN algorithm can quickly achieve an effect close to the Pareto boundary.
	
	\appendices
	\section{PRELIMINARY RESULTS}
	We first give two lemmas on gamma distribution and projection properties.
	\begin{lemma}
		\label{the1}
		If a random vector $\mathbf {x}$ obeys the distribution $\mathbf {x}\sim \mathcal {C}\mathcal {N}(0,{\sigma ^2}{\mathbf {I}_{\rm N}})$, then $\mathbf{x}$ satisfies ${\mathbf{x}^{\rm H} }\mathbf {x}\sim \Gamma ({\rm N},{\sigma ^2})$. If the random vector $\mathbf{y}$ obeys the distribution $y\sim \Gamma (a,b)$, then $\mathbb {E}\bigl\{\mathbf{y}\bigr\}=ab$ and ${\rm var}\bigl\{\mathbf {y}\bigr\}=ab^2$\cite{Heath2011Multiuser}.
	\end{lemma}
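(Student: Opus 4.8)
The statement collects two standard facts about the Gamma family, so the plan is to verify each directly from definitions, with particular attention to the parameterization conventions fixed earlier in the paper. Throughout I treat $\Gamma(k,\theta)$ in the shape-scale convention, and I recall that the paper's scalar convention $x \sim \mathcal{C}\mathcal{N}(0,\sigma^2)$ splits the variance as $\sigma^2/2$ onto each of the real and imaginary parts.

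For the first claim, I would write each entry of $\mathbf{x} \sim \mathcal{C}\mathcal{N}(0,\sigma^2\mathbf{I}_N)$ as $x_i = u_i + \mathrm{j} v_i$ with $u_i, v_i \sim \mathcal{N}(0,\sigma^2/2)$ independent, so that $\mathbf{x}^{\rm H}\mathbf{x} = \sum_{i=1}^N (u_i^2 + v_i^2)$. The key step is to identify the law of a single term: the normalized quantity $(u_i^2+v_i^2)/(\sigma^2/2)$ is a sum of two independent squared standard normals, hence chi-squared with two degrees of freedom, i.e. $\Gamma(1,2)$; rescaling by $\sigma^2/2$ gives $|x_i|^2 \sim \Gamma(1,\sigma^2)$, an exponential of mean $\sigma^2$. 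I would then invoke additivity of the Gamma family under a common scale, namely that independent $Y_i \sim \Gamma(k_i,\theta)$ satisfy $\sum_i Y_i \sim \Gamma(\sum_i k_i,\theta)$, which is cleanest via the product of moment-generating functions $(1-\theta s)^{-k_i}$. Applying this to the $N$ independent terms $|x_i|^2 \sim \Gamma(1,\sigma^2)$ yields $\mathbf{x}^{\rm H}\mathbf{x} \sim \Gamma(N,\sigma^2)$.

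For the second claim, I would compute the moments from the density $f(y) = \frac{1}{\Gamma(a)b^a} y^{a-1} e^{-y/b}$ on $y>0$. The substitution $t = y/b$ reduces each integral to a Gamma function, and together with the recurrence $\Gamma(a+1) = a\,\Gamma(a)$ this gives $\mathbb{E}[y] = b\,\Gamma(a+1)/\Gamma(a) = ab$ and $\mathbb{E}[y^2] = b^2\,\Gamma(a+2)/\Gamma(a) = a(a+1)b^2$, so that ${\rm var}[y] = a(a+1)b^2 - (ab)^2 = ab^2$. Equivalently, both moments drop out of differentiating the moment-generating function $(1-bs)^{-a}$ at $s=0$, which reuses the computation from the first part. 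Since both parts are textbook facts, there is no genuinely hard step; the only real care needed is the bookkeeping of conventions — that $\mathcal{C}\mathcal{N}(0,\sigma^2)$ puts variance $\sigma^2/2$ on each real component, and that the shape-scale (not shape-rate) reading of $\Gamma(k,\theta)$ makes the mean $k\theta$ — which is precisely what makes the exponent $N$ and the scale $\sigma^2$ emerge correctly in the first claim.
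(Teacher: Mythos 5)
Your proof is correct and complete: the decomposition of each entry into real and imaginary Gaussian parts of variance $\sigma^2/2$, the identification of $|x_i|^2$ as $\Gamma(1,\sigma^2)$, the additivity of Gamma variables under a common scale, and the moment computations for $\Gamma(a,b)$ are all standard and carried out with the right (shape--scale) convention. The paper itself does not prove this lemma but merely cites it as a known result, so your self-contained derivation is, if anything, more than the paper provides.
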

	\begin{lemma}
		\label{the2}
		If a set of independent random vectors $\bigl\{\mathbf {x}_i\bigr\}$ and each vector is distributed as $\Gamma ({k_i},{\theta _i})$, then $\sum\nolimits_i^{} {x_i^{\rm H}{x_i}} \sim \Gamma (k,\theta )$, where $k = \frac{{{{(\sum\nolimits_i^{} {{k_i}{\theta _i}} )}^2}}}{{\sum\nolimits_i^{} {{k_i}\theta _i^2} }}$, $\theta  = \frac{{\sum\nolimits_i^{} {{k_i}\theta _i^2} }}{{\sum\nolimits_i^{} {{k_i}\theta _i^{}} }}$\cite{Zhu2011Performance}.
	\end{lemma}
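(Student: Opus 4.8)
The plan is to read this statement as a second-order moment-matching approximation: a sum of independent Gamma variables sharing a common scale parameter is exactly Gamma, and when the scale parameters differ we retain the Gamma family as the fitting distribution by matching its first two moments to those of the sum. First I would set $Y_i \buildrel \Delta \over = \mathbf{x}_i^{\rm H}\mathbf{x}_i$, so that by hypothesis each $Y_i \sim \Gamma(k_i,\theta_i)$ and the $\{Y_i\}$ are mutually independent. Invoking Lemma \ref{the1}, each summand then has mean $\mathbb{E}[Y_i] = k_i\theta_i$ and variance ${\rm var}[Y_i] = k_i\theta_i^2$.

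Next I would compute the first two moments of the aggregate $S \buildrel \Delta \over = \sum_i Y_i$. By linearity of expectation, $\mathbb{E}[S] = \sum_i k_i\theta_i$; by independence of the summands, ${\rm var}[S] = \sum_i k_i\theta_i^2$. I would then posit that $S$ is approximated by a single variable $\Gamma(k,\theta)$ and impose that its moments, again supplied by Lemma \ref{the1} as mean $k\theta$ and variance $k\theta^2$, coincide with those of $S$. This produces the two matching equations $k\theta = \sum_i k_i\theta_i$ and $k\theta^2 = \sum_i k_i\theta_i^2$.

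Solving the system is routine: dividing the variance equation by the mean equation isolates $\theta = \frac{\sum_i k_i\theta_i^2}{\sum_i k_i\theta_i}$, and back-substitution yields $k = \frac{(\sum_i k_i\theta_i)^2}{\sum_i k_i\theta_i^2}$, which are exactly the stated expressions. The two formulas are thus forced by matching the mean and variance, so no further freedom remains.

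The step I expect to carry the real content is the justification of the Gamma family as the approximant, not the algebra. When all $\theta_i$ coincide the result is exact: the moment generating function of the sum factorizes as $\prod_i (1-\theta s)^{-k_i} = (1-\theta s)^{-\sum_i k_i}$, which is precisely the MGF of $\Gamma(\sum_i k_i,\theta)$, and in that case the matched parameters collapse to $k=\sum_i k_i$ and $\theta=\theta_i$, recovering the true distribution. For unequal $\theta_i$ the sum is not strictly Gamma, so the claimed relation should be read as a Welch--Satterthwaite-type approximation whose fidelity rests on the third- and higher-order cumulants of $S$ remaining close to those of the fitted Gamma. Flagging this approximation explicitly and exhibiting the exact equal-scale sub-case is, I believe, the crux of the argument, whereas the moment evaluation and the final solve are mechanical.
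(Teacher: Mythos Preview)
Your moment-matching derivation is correct and is exactly the standard Welch--Satterthwaite justification behind this Gamma approximation. Note, however, that the paper does not actually prove this lemma: it is stated as a preliminary result with a citation to \cite{Zhu2011Performance} and used as a black box in the subsequent derivations, so there is no in-paper proof to compare against. Your explicit flagging that the result is an approximation (exact only when all $\theta_i$ coincide) is a useful clarification that the paper's statement glosses over.
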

	\section{Proof of the Theorem 1}
	\label{theorem_1}
	In the aggregation channel, the multipath channel and NLOS paths reflected from the target are statistically independent, and NLOS paths reflected from different targets are also independent. Therefore, the autocorrelation matrix for the channel $\mathbf {h}_{mk}$ between the $k$th UE and AP $m$ is denoted as
	\begin{align}
		\label{form:hmk_autocorrelation}
		{\begin{aligned} &\mathbb{E}\bigl[ {{\mathbf{h}_{mk}}\mathbf{h}_{mk}^{\rm H}} \bigr] \\
				&= \mathbb{E}\bigl[ {{\mathbf{h}_{{{\rm A}_m},{{\rm U}_k}}}\mathbf{h}_{{{\rm A}_m},{{\rm U}_k}}^{\rm H}} \bigr] +\sum\limits_{t \in \mathcal{T}} {\alpha^2_t}\mathbb{E}\bigl[{{{\bigl| {{{\rm h}_{{{\rm T}_t},{{\rm U}_k}}}} \bigr|}^2}}\bigr]\mathbb{E}\bigl[{{\mathbf{h}_{{{\rm A}_m},{{\rm T}_t}}}\mathbf{h}_{{{\rm A}_m},{{\rm T}_t}}^{\rm H}}\bigr].
		\end{aligned}}
	\end{align}
	According to the definition of the channel, we can calculate that $\mathbb{E}\bigl[ {{{\mathbf{h}_{{{\rm A}_m},{{\rm U}_k}}}} {{\mathbf{h}_{{{\rm A}_m},{{\rm U}_k}}^{\rm H}}}} \bigr]=\lambda _{{{\rm A}_m},{{\rm U}_k}}^2{\mathbf{I}_N}$, $\mathbb{E}\bigl[ {{{\bigl\| {{{\rm h}_{{{\rm T}_t},{{\rm U}_k}}}} \bigr\|}^2}} \bigr] ={\bar {\rm d}_{{{\rm T}_t},{{\rm U}_k}}^{ -2{\alpha}}} + \sigma _{{{\rm T}_t},{{\rm U}_k}}^2=\gamma_{{{\rm T}_t},{{\rm U}_k}}$, $\mathbb{E}\bigl[ {{\mathbf{h}_{{{\rm A}_m},{{\rm T}_t}}}\mathbf{h}_{{{\rm A}_m},{{\rm T}_t}}^{\rm H}} \bigr] =( {{{\bar {\rm d}_{{{\rm A}_m},{{\rm T}_t}_{}}^{ - 2{\alpha}}}} + \sigma _{{{\rm A}_m},{{\rm T}_t}}^2})( {{{\bar {\mathbf q}}_{{{\rm A}_m},{{\rm T}_t}}}\bar {\mathbf{q}}_{{{\rm A}_m},{{\rm T}_t}}^{\rm H} + \chi _{{{\rm A}_m},{{\rm T}_t}}^2{{\mathbf{I}}_N}})=\bm{\zeta}_{{{\rm A}_m},{{\rm T}_t}}$, substituting the result into Eq. (\ref{form:hmk_autocorrelation}) gives $\mathbb{E}\bigl[ {{\mathbf{h}_{mk}}\mathbf{h}_{mk}^{\rm H}} \bigr]={{\bm \phi}_{ml}}$. A similar argument can be made to demonstrate that $\mathbb{E}\bigl[ {\mathbf{h}_{{\rm A},mn}^{}{{\mathbf{h}^{\rm H}_{{\rm A},mn}}}} \bigr]= {\bm \phi}^{\rm A}_{mn}$ and $\mathbb{E}\bigl[ {{{\bigl| {{{\rm h}_{{\rm I},u,l}}} \bigr|}^2}} \bigr]={{\rm \phi}^{\rm u}_{ul}}$. This completes the proof.
	
	\section{Proof of theorem 2}
	\label{theorem_2}
	Put expectations into $\gamma^{\rm dl}_{l}$. First, we calculate the molecular part
	\begin{align}
		\label{form:dl}
		\mathbb{E}\big[ {{{\bigl| {{D_l}} \bigr|}^2}} \bigr]={p_{\rm dl,l}}\varepsilon _{{\rm dl},l}\mathbb{E}\big[ {{{\big|{{ \sum\limits_{m \in {\mathcal{M} _{\rm}}}^{} \hat{\mathbf{h}}_{{\rm dl},ml}^{\rm H}{\hat{\mathbf{h}}_{{\rm dl},ml}}} } \big|}^2}} \big].
	\end{align}
	According to Eq. (\ref{form:h_hat_h_hatmk}), $\mathbb{E}\bigl[ {{{\hat {\mathbf {h}}}_{{\rm dl},ml}}\hat {\mathbf {h}}_{{\rm dl},ml}^{\rm H}}\bigr]= {\hat {\mathbf {R}}^{\rm dl}_{ml}}$. Actually, $  {{{\hat {\mathbf{h}}}^{\rm H}_{{\rm dl},ml}}\hat {\mathbf{h}}_{{\rm dl},ml}}=\sum\nolimits_{n = 1}^N {{{{\hat {\mathbf{h}}}^{\rm H}_{{\rm dl},ml,n}}\hat {\mathbf{h}}_{{\rm dl},ml,n}}}$, where ${{\hat {\mathbf {h}}}_{{\rm dl},ml,n}}$ denotes the estimated channel from the $n$th antenna of the AP $m$ to the UE $l$. It is known that ${{\hat {\mathbf {h}}}_{{\rm dl},ml}} \sim \mathcal {CN}(0,\hat {\mathbf {R}}^{\rm dl}_{ml})$, so ${{\hat {\mathbf{h}}}_{{\rm dl},ml,n}}\sim \mathcal{CN}\bigl(0,\bigl[\hat{\mathbf{R}}^{\rm dl}_{ml}\bigr]_n\bigr)$. According to the Lemma \ref{the1}, ${\hat {\mathbf{h}}}^{\rm H}_{{\rm dl},ml,n}{{\hat {\mathbf{h}}}_{{\rm dl},ml,n}}\sim \bigl( {1,{{\bigl[ {{\hat{\mathbf{R}}^{\rm dl}_{ml}}} \bigr]}_n}} \bigr)$. Combined with the Lemma \ref {the2}, we can get $\hat{\mathbf{h}}^{\rm H}_{{\rm dl},ml}{\hat {\mathbf{h}}_{{\rm dl},ml}}\sim \Gamma \bigl( {{{\hat k}_{{\rm dl},ml}},{{\hat \theta }_{{\rm dl},ml}}} \bigr)$ obeying gamma distribution, where 
	\begin{align}
		{{\hat k}_{{\rm dl},ml}}&={\frac{\bigl({\sum\nolimits_{n = 1}^N\bigl[ {{\hat{\mathbf{R}}^{\rm dl}_{ml}}} \bigr]_{n}}\bigr)^2}{{\sum\nolimits_{n = 1}^N {\bigl[ {{\hat{\mathbf{R}}^{\rm dl}_{ml}}} \bigr]_n^2} }}},\\ 
		{{\hat \theta }_{{\rm dl},ml}}&=\frac{{\sum\nolimits_{n = 1}^N {\bigl[ {{\hat{\mathbf{R}}^{\rm dl}_{ml}}} \bigr]_n^2} }}{{\sum\nolimits_{n = 1}^N {\bigl[ {{\hat{\mathbf{R}}^{\rm dl}_{ml}}} \bigr]_n^{}}}}. 
	\end{align}
	As previously stated in Lemma 2, the distrbution of ${\sum\nolimits_{m \in {\mathcal{M} _{\rm dl}}}^{} {\hat {\mathbf{h}}_{{\rm dl},ml}^{{\rm H}}\hat {\mathbf{h}}_{{\rm dl},ml}} }$ can be derived. Combining Lemma 1, we can get 
	\begin{align}
		\label{form:sum_h}
		&\mathbb{E}\bigg[ {{{\bigg| {\sum\limits_{m \in {\mathcal{M} _{\rm}}}^{} { \hat{\mathbf{h}}_{{\rm dl},ml}^{\rm H}{\hat{\mathbf{h}}_{{\rm dl},ml}}} } \bigg|}^2}} \bigg]\nonumber \\
		&=\sum\limits_{m \in {\mathcal{M} _{\rm}}}^{} {{\hat k}_{{\rm dl},ml}}\hat \theta _{{\rm dl},ml}^2 + {{(\sum\limits_{m \in {\mathcal{M} _{\rm}}}^{} {{{\hat k}_{{\rm dl},ml}}\hat \theta _{{\rm dl},ml}^{}})^2}}.
	\end{align}
	By substituting the aforementioned equation into Eq. (\ref{form:dl}), we can derive the numerator term.
	Next, the procedure involves the identification of individual components in the denominator. Since the channels are orthogonal between different UEs, the DL interference from other UEs can be expressed as 
	\begin{align}
		\mathbb{E}\bigl[ {{{\bigl| {\mathcal{N}_l^{\rm dl}} \bigr|}^2}} \bigr]=\sum\limits_{l' \in {\mathcal{K} _{\rm dl}}\backslash \{ l\} }^{}\sum\limits_{m \in {\mathcal{M}_{\rm}}}^{}{{p_{{\rm dl},l'}\varepsilon^2_{{\rm dl},l'}}}\mathbb{E}\big[{ {\big| \hat {\mathbf{h}}_{{\rm dl},ml}^{\rm H}\hat {\mathbf{h}}_{{\rm dl},ml'}\big|^2}}\big], 
	\end{align}
	where $\mathbb{E}\bigl[ {{{\bigl| \hat {\mathbf{h}}_{{\rm dl},ml}^{\rm H}\hat {\mathbf{h}}_{{\rm dl},ml'}\bigr|}^2}} \bigr]=\mathbb{E}\bigl[ {\hat {\mathbf{h}}_{{\rm dl},ml}^{\rm H}{{\hat {\mathbf{R}}}^{{\rm dl}}_{m{l'}}}\hat {\mathbf{h}}_{{\rm dl},ml}^{\rm H}} \bigr]=\mathbb{E}\bigl[\sum\nolimits_{n = 1}^N {{\hat {\mathbf{h}}_{{\rm dl},ml,n}^{\rm H}}{{\bigl[ {{{\hat {\mathbf{R}}}^{\rm dl}_{m{l'}}}} \bigr]}_n}{\hat {\mathbf{h}}_{{\rm dl},ml,n}}}\bigr]$. Using an approach similar to Eq. (\ref{form:sum_h}) yields, we can deduce that $\mathbb{E}\bigl[ {\hat {\mathbf{h}}_{{\rm dl},ml}^{\rm H}{{\hat {\mathbf{R}}}^{\rm dl}_{m{l'}}}\hat {\mathbf{h}}_{{\rm dl},ml}^{\rm H}} \bigr]={{{\rm tr}\bigl({{{{\hat {\mathbf{R}}}^{\rm dl}_{ml}}}}{{{{\hat {\mathbf{R}}}^{\rm dl}_{m{l'}}}}}\bigr)}}$. So the first term of the denominator is
	\begin{align}
		\label{form:I_inter_dl_res}
		\mathbb{E}\bigl[ {{{\bigl| {\mathcal{N}_l^{\rm dl}} \bigr|}^2}} \bigr]&=\sum\limits_{l' \in {\mathcal{K} _{\rm dl}}\backslash \{ l\} }^{}\sum\limits_{m \in {\mathcal{M}  _{\rm}}}^{} {{p_{{\rm dl},l'}\varepsilon^2_{{\rm dl},l'}}}{{{\rm tr}\bigl({{{{\hat {\mathbf{R}}}^{\rm dl}_{ml}}}}{{{{\hat {\mathbf{R}}}^{\rm dl}_{m{l'}}}}}\bigr)}}.
	\end{align}
	The second term in the denominator is similarly obtained through a similar approach, which can be expressed as 
	\begin{align}
		\mathbb{E}\bigl[\bigl|\mathcal{N}_l^{\rm error} \bigr|^2\bigr]= \sum\limits_{k \in {\mathcal{K} _{dl}}}\sum\limits_{m \in {\mathcal{M} _{\rm dl}}} {p_{{\rm dl},k}\varepsilon^2_{{\rm dl},k}}{\rm tr}\bigl(\hat {\mathbf{R}}^{\rm dl}_{mk} {\bm{\theta}}^{\rm dl}_{ml}\bigr).
	\end{align}
	It can be demonstrated that the DL channel estimation error and sensing beamforming are independent for different APs. The third term in the denominator can be expressed as 
	\begin{align}
		\label{form:dl_sense}
		\mathbb{E}\bigl[\bigl| \mathcal{N}_l^{\rm CLI-s} \bigr|^2\bigr]&=\sum\limits_{m \in {\mathcal{M} _{\rm}}} \mathbb{E}\bigl[\bigl| {\mathbf{e}_{{\rm dl},ml}^{\rm H}{ {{\mathbf{q}}_{{{ \rm A}_m},{{ \rm T}_t}}}} }\bigr|^2\bigr].
	\end{align}
	Following the definition of the steering vector in Eq. (\ref{form3:qat}), we can derive that 
	\begin{align}
		\label{form:qq_H}
		\mathbb{E}\bigl[{ {{\rm \mathbf{q}}_{{{ \rm A}_m},{{ \rm T}_t}}}}{ {{\mathbf{q}}^{\rm H}_{{{ \rm A}_m},{{ \rm T}_t}}}}\bigr]&={ {\bar  {\mathbf{q}}_{{{ \rm A}_m},{{ \rm T}_t}}}}{ {\bar {\rm \mathbf{q}}^{\rm H}_{{{ \rm A}_m},{{ \rm T}_t}}}}+\chi _{{{ \rm A}_m},{{ \rm T}_t}}^2{\mathbf{I}_N} ={\bm{\psi} _{mt}}.
	\end{align} 
	Subsequently, Eq. (\ref{form:dl_sense}) can be expressed as $\mathbb{E}\bigl[\sum\nolimits_{n = 1}^N\mathbf{e}_{{\rm dl},ml,n}^{\rm H}\bigl[{\bm{\psi} _{mt}}\bigr]_n\mathbf{e}_{{\rm dl},ml,n}\bigr]$. Using a derivation process similar to the above, it can be obtained
	\begin{align}
		\label{form:I_sense_dl_res}
		\mathbb{E}\bigl[\bigl| \mathcal{N}_l^{\rm CLI-s} \bigr|^2\bigr]&=\sum\limits_{t \in \mathcal{T}}\sum\limits_{m \in {\mathcal{M} _{\rm}}} \frac{{{p_{{\rm s},t}}}}{N}{\rm tr}\bigl({\bm{\psi} _{mt}}{\bm{\theta}}_{{\rm dl},ml}\bigr).
	\end{align}
	The final term is the CLI term, which, due to the orthogonality of the channels utilized by different UL UEs, can be expressed as
	\begin{align}
		\label{form:I_CLI_dl}
		\mathbb{E}\bigl[\bigl| \mathcal{N}_l^{\rm CLI-c} \bigr|^2\bigr]&=\sum\limits_{u \in {\mathcal{K} _{\rm ul}}}{{p_{{\rm ul},u}}} \mathbb{E}\bigl[\bigl|{{h_{{\rm I},u,l}}}\bigr|^2\bigr]=\sum\limits_{u \in {\mathcal{K} _{\rm ul}}}{{p_{{\rm ul},u}}}{{\rm \phi}^{\rm u}_{ul}}.
	\end{align}
	Combining all results completes the proof.
	\section{Proof of theorem 3}
	\label{theorem_3}
	Put expectations into $\gamma^{\rm ul}_{u}$. The derivations of ${\mathbb{E}\bigl[\bigl|{\mathcal{D}_u} \bigr|^2\bigr]}$, ${\mathbb{E}\bigl[\bigl|\mathcal{N}_u^{\rm ul}  \bigr|^2\bigr]}$, ${\mathbb{E}\bigl[\bigl|\mathcal{N}_u^{\rm error} \bigr|^2\bigr]}$ and ${\mathbb{E}\bigl[\bigl|\mathcal{N}_u^{\rm CLI-s}  \bigr|^2\bigr]}$ in Eq. (\ref{form:sinr_ul_res}) align with those in the Appendix \ref{theorem_2}, and will not be repeated here. The third term of the denominator is expressed as
	\begin{align}
		\label{form:I_CLI_com_ul}
		{\mathbb{E}\bigl[\bigl|\mathcal{N}_u^{\rm CLI-c}  \bigr|^2\bigr]}
		=\sum\limits_{m \in {\mathcal{M} _{\rm}}}^{}\sum\limits_{n \in {\mathcal{M} _{\rm }}}^{} {{{p_{{\rm dl},j}}} }{\mathbb{E}\bigg[\bigg|\sum\limits_{j \in {\mathcal{K} _{\rm dl}}}^{}\mathbf{v}_{nu}^{\rm H}{\mathbf{e}}_{{\rm A},mn}\mathbf{w}_{m,j}^{\rm c}\bigg|^2\bigg]},
	\end{align}
	Since ${\mathbf {e}}_{{\rm A},mn}$ and $\mathbf {v}_{nu}^{\rm H}$, $\mathbf {w}_{m,j}^{\rm c}$ are not correlated with each other, each of the terms in the above equation can be calculated individually. ${\mathbb{E}\bigl[\bigl\|{\mathbf{e}}_{{\rm A},mn}\bigr\|^2\bigr]}={\mathbb{E}\bigl[\bigl|{\mathbf{e}}_{{\rm A},mn}^{\rm H}{\mathbf{e}}_{{\rm A},mn}\bigr|\bigr]}={\rm tr}\bigl(\bm{\theta}_{mn}^{\rm A}\bigr)$. Combining the Lemma \ref {the1} and Lemma \ref {the2}, we can get
	\begin{align}
		\label{form:E2_CLI_com_ul}
		{\mathbb{E}\bigg[\bigg|\sum\limits_{j \in {\mathcal{K} _{\rm dl}}}^{}\mathbf{v}_{nu}^{\rm H}\mathbf{w}_{m,j}^{\rm c}\bigg|^2\bigg]}
		&=\sum\limits_{j \in {\mathcal{K} _{\rm dl}}}^{}\varepsilon _{{\rm ul },u}^{2}\varepsilon _{{\rm dl},j}^{2}{\mathbb{E}\bigl[\bigl| \hat{\mathbf{h}}_{{\rm ul},mu}^{\rm H}\hat{\mathbf{h}}_{{\rm dl},m,j}\bigr|^2\bigr]} \nonumber\\
		&=\sum\limits_{j \in {\mathcal{K} _{\rm dl}}}^{}\varepsilon _{{\rm ul},u}^{2}\varepsilon _{{\rm dl},j}^{2}{\rm tr}\bigl(\hat {\mathbf{R}}^{\rm ul}_{nu}\hat {\mathbf{R}}^{\rm dl}_{mj}\bigr).
	\end{align}
	By substituting the aforementioned equation into Eq. (\ref{form:I_CLI_com_ul}), we can derive the third term of the denominator. The last term in the denominator can be expressed as
	\begin{align}
		\label{form:n_ul}
		\mathbb{E}\bigl[\bigl|\mathcal{N}_u^{\rm noise}\bigr|^2\bigr]&=\sigma^2_{\rm ul}\sum\limits_{n \in {\mathcal{M} _{\rm }}} \mathbb{E}\bigl[\bigl\| {\mathbf{v}_{nu}^{\rm H}}\bigr\|^2\bigr]=\sigma^2_{\rm ul}\sum\limits_{n \in {\mathcal{M} _{\rm }}} {\rm tr}(\hat{\mathbf{R}}^{\rm ul}_{nu}).
	\end{align}
	Combining all results completes the proof.
	\section{Proof of theorem 4}
	\label{theorem_4}
	The received signals are separately mapped onto distinct sensing symbols, enabling accurate identification of the corresponding signals $y_{m,n,t}$. Concerning the active MIMO radar signal model, the detailed modeling of ${\mathbf{y}_{m,n,t}}$ is represented as
	\begin{equation}
		\label{form:sense_h_nlos_detail}
		{\begin{aligned}{\mathbf{y}_{m,n,t}} =\sum\limits_{i \in \mathcal{T}}{{\sqrt {{p_{{\rm s},t}}} }{\alpha _i}\eta _{m,n,i}\mathbf{a}_{mi}\otimes\mathbf{b}_{ni}\mathbf{w}_{mt}^{\rm s}}e^{-j2\pi\frac{\Delta f}{c}d_{m,n,i}}+{\mathbf{z}},
		\end{aligned}}
	\end{equation}
	where $\mathbf{a}_{mi}={\mathbf{q}}_{{{ \rm A}_m},{{ \rm T}_t}}$, $\mathbf{b}_{in}={\mathbf{q}}_{{{ \rm T}_i},{{ \rm A}_n}}$, $\eta _{m,n,i}={\lambda _{{{ \rm A}_m},{{ \rm T}_i}}}{\lambda _{{{ \rm T}_i},{{ \rm A}_n}}}$ represents large-scale fading. 
	First, we derive the closed-form expression for the noise $\mathbf{z}$. The terms of Eq. (\ref{form:noise}) are obtained through a method analogous to that used in Appendix \ref{theorem_2}. These terms can be expressed as
	\begin{align}
		{\mathbb{E}\bigl[\bigl\|{\bar {\mathbf { y}}}_{{\rm dl},n}\bigr\|^2\bigr]}
		&=\sum\limits_{m \in {\mathcal{M} _{\rm }}}^{}\sum\limits_{j \in {\mathcal{K} _{\rm dl}}}^{}{{{p_{{\rm dl},j}}} }\varepsilon _{{\rm dl},j}^{2}{\rm tr}\bigl(\bm {\theta}_{mn}^{\rm A}\hat {\mathbf{R}}^{\rm dl}_{mj}\bigr)=\sigma^2_{{\rm dl},n},\\
		\mathbb{E}\bigl[\bigl\|{\bar {\mathbf { y}}}_{{\rm ul},n}\bigr\|^2\bigr]&=\sum\limits_{k \in {\mathcal{K} _{\rm ul}}} {{p_{{\rm ul},k}}}{\rm tr}\bigl(\hat{\mathbf{R}}^{\rm ul}_{nk}\bigr)=\sigma^2_{{\rm ul},n}.
	\end{align}
	Based on the above equation the CRLB of the estimated parameters can be obtained. The CRLB is obtained by computing the inverse of the Bayesian Fisher Information Matrix (FIM), represented as
	\begin{align}
		\label{form:crlb_fim}
		{\begin{aligned}\mathbb{E}\bigl[ {({{\hat {\bm{\gamma}} }_{m,n,t}} - {\bm {\gamma} _{m,n,t}}){{({{\hat {\bm{\gamma}}  }_{m,n,t}} - {\bm {\gamma} _{m,n,t}})}^{\rm T}}} \bigr] \succeq \mathbf{J}_{m,n,t}^{ - 1},
		\end{aligned}}
	\end{align}
	where ${\bm {\gamma} _{m,n,t}} = {\bigl [ {{d_{m,n,t}},{\theta _{mt}},{\phi _{tn}}} \bigr]{\rm ^T}}$ is the true value of the estimated parameter of target $t$, $d_{m,n,t}=d_{{\rm A}_{m},{\rm T}_{t}}+d_{{\rm T}_{t},{\rm A}_{n}}$, $\hat {\bm {\gamma}}  _{m,n,t}= {\bigl [ {{\hat d_{m,n,t}},{\hat \theta _{mt}},{\hat \phi _{tn}}} \bigr]{\rm ^T}}$  is the estimated value of the parameter of target $t$ based on Eq. (\ref{form:sense_h_nlos_detail}), $\mathbf {J}_{m,n,t}^{} \in {{\mathbb R}^{3 \times 3}}$ is the FIM for the unknown matrix $\bm {\gamma} _{m,n,t}$. The FIM can be computed through
	\begin{align}
		\label{form:fim}
		{\begin{aligned}\mathbf{J}_{m,n,t}=\frac{1}{\sigma^2_{z}}(\frac{{{\rm d}{\mathbf{y}_{m,n,t}}}}{{{\rm d}\bm {\gamma} _{m,n,t}^T}})^{\rm H}(\frac{{{\rm d}{\mathbf{y}_{m,n,t}}}}{{{\rm d}\bm {\gamma} _{m,n,t}^T}}),
		\end{aligned}}
	\end{align}
	where ${\sigma^2_{z}}=\sigma^2_{{\rm dl},n}+\sigma^2_{{\rm ul},n}+\sigma^2_{\rm s}$, $\frac {{{\rm d}{\mathbf {y}_{m,n,t}}}}{{{\rm d}\bm {\gamma} _{m,n,t}^T}}$  expressed as
	\begin{align}
		{\begin{aligned}
				\frac{{{\rm d}{\mathbf{y}_{m,n,t}}}}{{{\rm d}\bm {\gamma}_{m,n,t}^T}}=\zeta_{m,n,t} \times\bigl[& -j2\pi \frac{\Delta f}{c}\mathbf{a}_{mt}\otimes\mathbf{b}_{tn},\\
				&\frac{{\rm d}{\mathbf{a}_{mt}}}{{\rm d}\theta_{mt}}\otimes{\mathbf{b}_{tn}},\mathbf{a}_{mt}\otimes\frac{{\rm d}\mathbf{b}_{tn}}{{\rm d}\phi_{tn}}\bigr],
		\end{aligned}}
	\end{align}
	where $\zeta_{m,n,t}={\alpha^2_t}{{{\bigl\| {\mathbf{w}_{m,t}^{\rm s}} \bigr\|}^2}}e^{-j2\pi \frac{\Delta f}{c}d_{m,n,t}}$, and
	\begin{subequations}
		\begin{align}
			&\frac{{\rm d}{\mathbf{a}_{mt}}}{{\rm d}\theta_{mt}}=-j\frac{2\pi}{\lambda}\bigl[\mathbf{p}^{\rm T}_{m1}\mathbf{k}_{mt}\mathbf{a}_{mt,1},...,\mathbf{p}^{\rm T}_{mN}\mathbf{k}_{mt}\mathbf{a}_{mt,N}\bigr], \\
			&\frac{{\rm d}{\mathbf{b}_{tn}}}{{\rm d}\phi_{tn}}=-j\frac{2\pi}{\lambda}\bigl[\mathbf{p}^{\rm T}_{n1}\mathbf{k}_{tn}\mathbf{b}_{tn,1},...,\mathbf{p}^{\rm T}_{nN}\mathbf{k}_{tn}\mathbf{b}_{tn,N}\bigr],
		\end{align}
	\end{subequations}
	where $\mathbf{k}_{mt}=[-\sin(\theta_{mt}),\cos(\theta_{mt})]^{\rm T}$, and  $\mathbf{k}_{nt}=[-\sin(\phi_{tn}),\cos(\phi_{tn})]^{\rm T}$.
	In the context of large antenna arrays, the orthogonal nature of multiple paths from distinct directions allows for the asymptotic approximation of the non-diagonal sub-matrix of the FIM as the zero matrix. This results in an FIM approximation matrix of a block diagonal structure. The block diagonal FIM matrix enables the calculation of the closed-form expression for the CRLB of each estimated parameter. Inverting Eq. (\ref{form:fim}) and taking out the block for $d_{m,n,t}$, $\theta_{mt}$ and $\phi_{tn}$ yields Theorem \ref{the:LER}. 
	
	This completes the proof.

  \bibliographystyle{IEEEtran}
	\bibliography{ref} 

\vfill

\end{document}